\documentclass[12pt]{article}

\usepackage[T1]{fontenc}
\usepackage{kpfonts}
\usepackage{pdflscape}
\usepackage{setspace}
\usepackage[margin=1.25in]{geometry}

\usepackage[dvipsnames]{xcolor}

\usepackage{pdfpages}
\usepackage{float}
\usepackage{multirow}
\usepackage{caption}
\usepackage{subcaption}
\usepackage{epigraph}

\usepackage{mathtools}
\usepackage{amsthm}
\usepackage{aliascnt} 
\usepackage{accents}
\usepackage{dutchcal}
\usepackage{bbm}

\usepackage{enumitem}
\usepackage{sgame}

\usepackage{tikz}
\usepackage{tikz-cd}
\usetikzlibrary{calc,shapes,arrows}

\usepackage{tcolorbox}
\usepackage{listings}

\usepackage[normalem]{ulem}

\usepackage[round]{natbib}
\newtheorem{theorem}{Theorem}[section]

\newaliascnt{proposition}{theorem}
\newtheorem{proposition}[proposition]{Proposition}
\aliascntresetthe{proposition}

\newaliascnt{lemma}{theorem}
\newtheorem{lemma}[lemma]{Lemma}
\aliascntresetthe{lemma}

\newaliascnt{corollary}{theorem}

\aliascntresetthe{corollary}

\newaliascnt{claim}{theorem}
\newtheorem{claim}[claim]{Claim}
\aliascntresetthe{claim}

\theoremstyle{definition}

\newaliascnt{definition}{theorem}
\newtheorem{definition}[definition]{Definition}
\aliascntresetthe{definition}

\newaliascnt{example}{theorem}

\aliascntresetthe{example}

\newaliascnt{assumption}{theorem}

\aliascntresetthe{assumption}

\newaliascnt{condition}{theorem}

\aliascntresetthe{condition}

\newaliascnt{question}{theorem}

\aliascntresetthe{question}

\newaliascnt{remark}{theorem}

\aliascntresetthe{remark}

\newaliascnt{remarks}{theorem}

\aliascntresetthe{remarks}

\newaliascnt{aside}{theorem}

\aliascntresetthe{aside}

\newaliascnt{note}{theorem}

\aliascntresetthe{note}

\usepackage{thmtools}
\usepackage{thm-restate}

\usepackage{hyperref}

\hypersetup{
    colorlinks=true,
    linkcolor=Aquamarine,
    filecolor=Thistle,
    urlcolor=Thistle,
    citecolor=Thistle,
}

\usepackage[nameinlink]{cleveref}

\crefname{theorem}{theorem}{theorems}
\Crefname{theorem}{Theorem}{Theorems}

\crefname{proposition}{proposition}{propositions}
\Crefname{proposition}{Proposition}{Propositions}

\crefname{lemma}{lemma}{lemmas}
\Crefname{lemma}{Lemma}{Lemmas}

\crefname{corollary}{corollary}{corollaries}
\Crefname{corollary}{Corollary}{Corollaries}

\crefname{claim}{claim}{claims}
\Crefname{claim}{Claim}{Claims}

\crefname{definition}{definition}{definitions}
\Crefname{definition}{Definition}{Definitions}

\crefname{example}{example}{examples}
\Crefname{example}{Example}{Examples}

\crefname{assumption}{assumption}{assumptions}
\Crefname{assumption}{Assumption}{Assumptions}

\makeatletter
\let\cref@old@isrefconsecutive\cref@isrefconsecutive

\def\cref@isrefconsecutive#1#2{%
  \begingroup
    \def\cref@assumptiontype{assumption}%
    \cref@gettype{#1}{\cref@typea}%
    \ifx\cref@typea\cref@assumptiontype
      \endgroup
      \@cref@refconsecutivefalse
    \else
      \endgroup
      \cref@old@isrefconsecutive{#1}{#2}%
    \fi
}
\makeatother

\crefname{condition}{condition}{conditions}
\Crefname{condition}{Condition}{Conditions}

\crefname{question}{question}{questions}
\Crefname{question}{Question}{Questions}

\crefname{remark}{remark}{remarks}
\Crefname{remark}{Remark}{Remarks}

\crefname{remarks}{remarks}{remarks}
\Crefname{remarks}{Remarks}{Remarks}

\crefname{aside}{aside}{asides}
\Crefname{aside}{Aside}{Asides}

\crefname{note}{note}{notes}
\Crefname{note}{Note}{Notes}

\crefname{appendix}{appendix}{appendices}
\Crefname{appendix}{Appendix}{Appendices}

\newenvironment{introepigraph}{%
  \vspace{0.5em}
  \begin{center}
  \begin{minipage}{0.78\textwidth}
  \hrule height 0.4pt
  \vspace{0.7em}
  \footnotesize
  \setlength{\parindent}{0pt}
  \setlength{\parskip}{0.55em}
}{%
  \vspace{0.35em}
  \hrule height 0.4pt
  \end{minipage}
  \end{center}
  \vspace{1.1em}
}

\newcommand{\epigraphsource}[1]{%
  \par\vspace{0.2em}
  {\normalfont\footnotesize\hfill---\,#1}
}

\newcommand{\secref}[1]{\hyperref[#1]{\S\ref*{#1}}}

\definecolor{backcolour}{rgb}{0.63, 0.79, 0.95}

\lstdefinestyle{mystyle}{
  backgroundcolor=\color{backcolour},
  basicstyle=\ttfamily\footnotesize,
  breakatwhitespace=false,
  breaklines=true,
  captionpos=b,
  keepspaces=true,
  numbers=left,
  numbersep=5pt,
  showspaces=false,
  showstringspaces=false,
  showtabs=false,
  tabsize=2
}

\begin{document}
\author{Cooper Howes \and Can Urgun \and Mark Whitmeyer\thanks{CH: Federal Reserve Board of Governors, \href{cooper.a.howes@frb.gov}{ cooper.a.howes@frb.gov}, CU: \href{curgun@unc.edu}{curgun@unc.edu}, \& MW: Arizona State University, \href{mailto:mark.whitmeyer@gmail.com}{mark.whitmeyer@gmail.com}. We thank a conference audience at the 2025 SEA Annual Meetings for their feedback. The views expressed in this paper are solely those of the authors. They do not necessarily reflect the views of the Federal Reserve Board or the Federal Reserve System.}}

\title{Delusions of Grandeur and Their Benefits (and Hazards)}
\maketitle

\begin{abstract}
We study a population-wide tournament in which agents, who care both about their absolute and relative wealths, experiment by searching over correlated objects. We explore the role of the agents' beliefs about the environment; namely, the stochastic processes corresponding to their experimentation. We find that although optimism leads to higher output, it also produces greater inequality. We connect these observations with empirical evidence suggesting a positive relationship between inequality and entrepreneurship.
\end{abstract}

\begin{introepigraph}
The lottery of the law, therefore, is very far from being a perfectly fair
lottery; and that, as well as many other liberal and honourable professions,
is, in point of pecuniary gain, evidently under-recompenced.

Those professions keep their level, however, with other occupations, and,
notwithstanding these discouragements, all the most generous and liberal spirits
are eager to crowd into them.  Two different causes contribute to recommend
them.  First, the desire of the reputation which attends upon superior excellence
in any of them; and, secondly, the natural confidence which every man has more
or less, not only in his own abilities, but in his own good fortune.

\ldots

The over-weening conceit which the greater part of men have of their own
abilities, is an ancient evil remarked by the philosophers and moralists of all
ages.  Their absurd presumption in their own good fortune, has been less taken
notice of.  It is, however, if possible, still more universal.  There is no man
living who, when in tolerable health and spirits, has not some share of it.  The
chance of gain is by every man more or less over-valued, and the chance of loss
is by most men under-valued.

\epigraphsource{\citet[Book I, Ch.~X]{Smith1776}}
\end{introepigraph}

\section{Introduction}

Every entrepreneur understands the odds in the third person and rejects them in the first--every new venture is an argument that its founder is special. The product will find demand, the experiment will work, the market will arrive, or the next iteration will be the one that turns the corner. The social consequences of these delusions of grandeur are not obvious. The same belief that leads an entrepreneur to persist too long may also be the belief that keeps her searching long enough to discover something valuable.

We study the ramifications of entrepreneurial overconfidence in a model of competition by innovators. Our answer has two parts. Overconfidence raises output because it makes entrepreneurs harder to discourage. They continue after bad news, and so they find better opportunities. But overconfidence also raises inequality. The additional persistence generated by optimistic beliefs does not improve all outcomes equally, raising the upper tail by more than the rest of the distribution. Thus, \textbf{delusions of grandeur are productive, but their benefits are unevenly distributed.}

In our model, there is a continuum of \textit{ex-ante} identical entrepreneurs. Each entrepreneur experiments over time and may stop whenever she chooses. If she stops, she implements the best option she has found so far. Continuing is costly, and payoffs depend both on absolute success and on relative success: an entrepreneur values
the quality of her own best discovery, but she also values her rank in the population. The distribution of other entrepreneurs' outcomes is, therefore, an equilibrium object and determines how valuable a given discovery is in relative terms.

Crucially, our model also incorporates a simple asymmetry in beliefs. In actuality, no entrepreneur is special: all entrepreneurs face the same true search technology. Subjectively, however, each entrepreneur believes that her own project is special: formally, each entrepreneur believes that her own search process has positive drift, even though the true drift is zero. Moreover, she knows that other entrepreneurs also think they are special and that their decisions reflect this confidence. But she believes that their confidence is misplaced, while her own is warranted. Everyone thinks she is special, everyone knows that everyone else thinks she is special, and everyone thinks that everyone else is wrong.\footnote{This premise
is consistent with evidence on optimism, overconfidence, and self-assessed
ability among entrepreneurs and entrants; see, among others,
\citet{CamererLovallo1999}, \citet{DeMezaSouthey1996},
\citet{ArabsheibaniDeMezaMaloneyPearson2000},
\citet{KoellingerMinnitiSchade2007}, \citet{PuriRobinson2007},
\citet{DawsonDeMezaHenleyArabsheibani2014}, and
\citet{KoudstaalSloofVanPraag2016}.}

The decision problem each entrepreneur solves (at equilibrium) has a transparent form. An entrepreneur does not stop simply because the current state of her project is bad--after all, she can always fall back on the best option she has already discovered. What matters is how far the current project has fallen below that personal best, and so an entrepreneur keeps going until this disappointment becomes too large. Equilibrium behavior is summarized by a tolerance for bad news: how far below her best-so-far outcome an entrepreneur is willing to let the project fall before quitting.

\textbf{Overconfidence raises this tolerance.} A more optimistic entrepreneur thinks the future is more promising, so she is willing to absorb a larger setback before giving up. In equilibrium, this happens throughout the distribution, with more optimistic entrepreneurs searching longer, reaching better personal records, and stopping with better discoveries. Our first main result is, thus, holding fixed the true technology, a more overconfident economy generates a better distribution of realized discoveries (in a first-order stochastic dominance sense). Accordingly, \textbf{delusions of grandeur are good}: they raise output and fan the flames of technological growth.

Our second main result is distributional. In particular, the improvement from overconfidence is not a parallel shift. Extra persistence is most valuable for entrepreneurs who end up high in the realized distribution. These entrepreneurs have already
survived many earlier setbacks, and the additional willingness to continue helps them push the frontier even farther. Consequently, overconfidence strictly increases the variance of realized outcomes. \textbf{Delusions of grandeur are bad}: they lead to greater inequality.

It is important to keep in mind that optimists do not discover more because the true process is better--it is the same in every economy. Optimists discover more because they quit later. Overconfidence acts like a private subsidy to persistence, making discouraging evidence feel less decisive. This raises the average quality of discoveries, but it also increases the time spent experimenting and makes realized success more unequal.

Our empirical section takes two implications of the model to the data. The first concerns outcomes. If entrepreneurial experimentation raises discoveries while stretching the upper tail, then economies with more entrepreneurship should tend to exhibit more inequality. Consistent with this prediction, we document a positive cross-country relationship between entrepreneurship and inequality, using both the Gini coefficient and the 90/10 percentile ratio. The relationship remains positive after controlling for GDP per capita and population.

The second implication concerns the premise of the model. Our mechanism is not that entrepreneurs simply have a taste for risk but that they hold optimistic beliefs about their own prospects. We find little evidence that country-level measures of risk tolerance, uncertainty avoidance, equity-market participation, lottery sales, fear of failure, or business failure rates explain entrepreneurship. These variables also do not
remove the positive relationship between entrepreneurship and inequality. In contrast, survey measures of entrepreneurial self-belief are strongly related to entrepreneurship. In particular, the share of respondents who say that they have the ability to start a business is the strongest predictor in our cross-country exercises. The evidence is only correlational, but it supports the interpretation
that entrepreneurship is tied more closely to beliefs about one's own prospects than to (intrinsic) risk appetite.\footnote{This use of survey beliefs connects the paper to the broader survey-based literature on economic perceptions, social position, fairness, and policy views, including
\citet{KuziemkoNortonSaezStantcheva2015},
\citet{AlesinaStantchevaTeso2018}, \citet{Stantcheva2021TaxPolicy},
\citet{HvidbergKreinerStantcheva2023}, and \citet{Stantcheva2023Surveys}.}

We organize the rest of the paper as follows. \secref{sec:related}
discusses related work. \secref{sec:model} introduces the model.
\secref{sec:eq-characterization} characterizes equilibrium. \secref{sec:comparative-statics}
studies how overconfidence changes output, inequality, and experimentation time.
\secref{sec:empirics} presents the empirical evidence. Proofs of results lie in \Cref{omittedproofs,omittedproofsdeux}.

\subsection{Related Work}\label{sec:related}

There is by now a rich literature studying contests in which agents compete by choosing fair gambles,\footnote{See, for instance, \citet{FangNoe2016,FangNoe2022}, \citet{WagmanConitzer2012}, \citet{BoleslavskyCotton2015,BoleslavskyCotton2018}, \citet{Albrecht2017}, \citet{AuKawai2020}, and \citet{Spiegler2006}. In \citet{Whitmeyer2026DistributionalCompetition}, the gambles need not be fair.} or, equivalently,\footnote{The Skorokhod embedding theorem is the bridge equating the two.} when to stop Brownian motions.\footnote{This problem was first studied by \citet{SeelStrack2013}, with later extensions by \citet{FengHobson2015,FengHobson2016a,FengHobson2016b}, \citet{NutzZhang2022}, and \citet{Whitmeyer2023SubmissionCosts}.} A crucial distinction between these papers and ours is that upon stopping, our agents leave with the maximum value their processes hit, whereas in these papers, it is the stopped value of the processes they end with. The Brownian environment corresponds to a search environment with a continuum of correlated items; and in many settings, the ``perfect recall'' of our environment seems more reasonable than the ``no recall'' of these papers.
A related set of search-contest and research-tournament papers studies
environments in which contestants sequentially sample independent opportunities
rather than explore a correlated path.\footnote{See \citet{Taylor1995},
\citet{Whitmeyer2018}, \citet{ChenChenKnyazev2022},
\citet{HudjaRobersonRosokha2025}, and \citet{ErkurtOzdenoren2026}. These papers study independent-draw or independent-sampling search contests, with varying recall conventions.} Our paper is also part of a sizeable literature on incentives for risk taking in tournaments, R\&D races, promotion contests, and portfolio-management contests.\footnote{See, among others, \citet{DasguptaStiglitz1980}, \citet{BhattacharyaMookherjee1986}, \citet{KletteDeMeza1986}, \citet{Hvide2002}, \citet{HvideKristiansen2003}, \citet{GoelThakor2008}, \citet{Gilpatric2009}, \citet{FangNoeStrack2020}, \citet{BasakMakarov2015}, and \citet{Strack2016}.}

The closest technical antecedents of our work are models of search on a correlated path.\footnote{Refer to, in particular, \citet{UrgunYariv2025ContiguousSearch} and \citet{CetemenUrgunYariv2023CollectiveProgress}.} In these papers, an agent explores a Brownian or Gaussian environment and later implements the best observation encountered. We share the retrospective-best payoff, but embed the individual stopping problem in a population-wide game, making the agents' payoffs endogenous. The paper is also connected to trial-and-error learning and correlated search, where local experimentation reveals information about nearby alternatives.\footnote{See \citet{JovanovicRob1990}, \citet{Callander2011TrialError}, \citet{Callander2011GoodPolicies}, \citet{CallanderMatouschek2019Risk}, \citet{CallanderMatouschek2022Novelty}, and \citet{CallanderLambertMatouschek2025Rugged}.  On correlated attributes, selective learning, and broader correlated-learning environments, see \citet{Bardhi2024Attributes} and \citet{BardhiCallander2026Learning}.}


The paper also speaks to work on optimism, overconfidence, and entrepreneurship.  Existing models emphasize entry, financing, managerial selection, and entrepreneurial choice.  Our mechanism is different: optimism operates not through who enters, but through how long a fixed population continues experimenting.  This distinction is important for welfare.  Deluded beliefs can be productive because they sustain search, but hazardous because the gains accrue disproportionately to agents who end up high in the realized distribution.\footnote{On optimism and overconfidence in economic choice, entry, entrepreneurship, and management, see \citet{CamererLovallo1999}, \citet{DeMezaSouthey1996}, \citet{BernardoWelch2001}, \citet{PuriRobinson2007}, \citet{GoelThakor2008}, and \citet{LandierThesmar2009}.  Related models of motivated or optimal beliefs include \citet{BenabouTirole2002} and \citet{BrunnermeierParker2005}.}

Empirically, the paper is connected to evidence that entrepreneurship is associated with optimistic or overconfident beliefs, and that standard risk-preference measures do not fully account for entrepreneurial behavior.  This matches our mechanism: optimism matters because it changes the willingness to continue after bad realizations, not simply because agents have a taste for risk.\footnote{Evidence on optimism, overconfidence, and entrepreneurship includes \citet{ArabsheibaniDeMezaMaloneyPearson2000}, \citet{KoellingerMinnitiSchade2007}, \citet{PuriRobinson2007}, \citet{DawsonDeMezaHenleyArabsheibani2014}, \citet{Astebro2003}, \citet{AstebroHerzNandaWeber2014}, and \citet{KoudstaalSloofVanPraag2016}.}  The paper is also related to empirical work on entrepreneurial returns and experimentation.  Low average returns, high idiosyncratic risk, and skewed outcomes are consistent with a view of entrepreneurship as experimentation with large upside realizations.\footnote{See \citet{Hamilton2000}, \citet{MoskowitzVissingJorgensen2002}, \citet{KerrNandaRhodesKropf2014}, \citet{Manso2016}, and \citet{LevineRubinstein2017}.}

Finally, our empirical discussion relates to the literature on entrepreneurship and wealth inequality.  Much of that literature studies borrowing constraints, entrepreneurial saving, and business ownership as sources of wealth concentration.  We offer a complementary belief-based channel: a society with more entrepreneurial experimentation can generate both higher output and more unequal outcomes, even without ex ante heterogeneity in ability or wealth.\footnote{On entrepreneurship, wealth, and inequality, see \citet{EvansJovanovic1989}, \citet{Quadrini2000}, \citet{HurstLusardi2004}, and \citet{CagettiDeNardi2006}.}

\section{Model}\label{sec:model}

There is a continuum of experimenting agents of measure one. Experimentation takes the form of costly search over options: each agent runs an independent driftless Brownian process \(X_t=\sigma W_t\), where \(W\) is a standard Brownian motion and \(\sigma>0\). Let \(M_t\coloneqq\sup_{0\le s\le t}X_s\) denote the best option discovered by the agent by time \(t\). Agents are misspecified: for perceived drift \(\mu>0\), each agent evaluates stopping times under the perceived law \(\mathbb P^\mu\), under which \(X_t=\mu t+\sigma W_t^\mu\). The parameter \(\mu\) is the level of \textit{optimism}.

Experimentation is costly: each agent incurs a common flow cost \(c>0\). At any time, an agent may stop experimenting, in which case she receives the best option she has seen so far, \(M_t\). Agents care about both the absolute quality of the best option they find and their rank in the population. Agents evaluate rank through a conjectured rank schedule: a cumulative distribution function \(K\) on \([0,\infty)\). Given \(K\), define \(\phi_K(m)\coloneqq m+y\left(K(m)\right)\), where \(y\in C^2\left([0,1]\right)\), \(y'>0\), \(y''\le0\), and \(y'\) is bounded. An agent chooses a stopping time \(\tau\) to solve
\[
\sup_\tau\mathbb E^\mu\left[\phi_K(M_\tau)-c\tau\right].
\]
At equilibrium, the conjectured schedule \(K\) equals the realized distribution of stopped maxima under the true law.

For the verification arguments below, it is useful to describe the problem from an arbitrary current position. If the agent's current best option is \(m\), her current option can be described as \(m-z\), where \(z\) is the current \emph{drawdown}. Thus, as a random variable, the drawdown is \(D_t\coloneqq M_t-X_t\). Starting from \((m,z)\) means \((M_0,D_0)=(m,z)\), and we write \(\mathbb E^\mu_{m,z}\) for expectation under the perceived law from this starting point. We say a stopping time \(\tau\) is \emph{\(K\)-admissible} from \((m,z)\) if \(\mathbb E^\mu_{m,z}\left[\tau\right]<\infty\) and \(\mathbb E^\mu_{m,z}\left[\phi_K(M_\tau)^+\right]<\infty\). In our model, agents begin from \((0,0)\). In the fixed-\(K\) stopping problem, we take optimality over \(K\)-admissible stopping times.

Next, define \(\bar d_\mu\coloneqq\sigma^2\log\left(c/(c-\mu)\right)/(2\mu)\), and let
\[
\mathcal M\coloneqq\left\{\mu\in(0,c)\colon\frac{\mu}{c}\left(1+\frac{y'(0)}{\bar d_\mu}\right)<1\right\}.
\]
We make the standing assumption \(\mu\in\mathcal M\), which is a finite-stopping condition. It rules out cases in which the perceived drift and the marginal rank reward are so large that the agent wants to continue forever. 

A \emph{drawdown boundary} is a positive function \(d\colon[0,\infty)\to(0,\infty)\). Given \(d\), define \[\tau_d\coloneqq\inf\left\{t\ge0\colon M_t-X_t\ge d\left(M_t\right)\right\},\]
and write
\[
C_d\coloneqq \left\{\left(m,z\right)\colon m\ge 0,\ 0<z<d\left(m\right)\right\}.
\]

For a conjecture \(K\) with \(\phi_K\in C^1\), write
\[
V^\mu_K\left(m,z\right)\coloneqq \sup_\tau \mathbb E^\mu_{m,z}\left[\phi_K\left(M_\tau\right)-c\tau\right],
\]

where the supremum is over stopping times that are \(K\)-admissible from \(\left(m,z\right)\). A function \(V\) is \emph{regular} for \(d\) with terminal payoff \(\psi\) at perceived drift \(\mu\) if \(V\in C^{1,2}\left(C_d\right)\), extends continuously with the relevant one-sided derivatives to the boundaries, satisfies
\[
\frac{\sigma^2}{2}V_{zz}\left(m,z\right)-\mu V_z\left(m,z\right)-c=0,\qquad \forall \left(m,z\right)\in C_d,
\]
and obeys \emph{value matching} \(V\left(m,d\left(m\right)\right)=\psi\left(m\right)\), \emph{smooth fit} \(V_z\left(m,d\left(m\right)^-\right)=0\), and the \emph{reflected-boundary condition} \(V_m\left(m,0^+\right)+V_z\left(m,0^+\right)=0\).

Given \(K\), the boundary \(d\) is a \emph{regular optimal drawdown boundary} for the fixed-\(K\) problem at perceived drift \(\mu\) if \(\tau_d\) attains \(V^\mu_K\left(m,z\right)\) for every \(\left(m,z\right)\) with \(0\le z<d\left(m\right)\), among stopping times that are \(K\)-admissible from \(\left(m,z\right)\), and \(V^\mu_K\) is regular for \(d\) with terminal payoff \(\phi_K\) at perceived drift \(\mu\).
\begin{definition}
A \emph{regular drawdown equilibrium} at perceived drift \(\mu\in\mathcal M\) is a pair \(\left(K,d\right)\) with the following properties. The conjecture \(K\) is a cumulative distribution function on \(\left[0,\infty\right)\) such that \(\phi_K\in C^1\). The boundary \(d\) is \(C^1\), positive, bounded, bounded away from zero, and has a finite limit. Given \(K\), the boundary \(d\) is a regular optimal drawdown boundary for the fixed-\(K\) problem and \(K\) is consistent with the agents' equilibrium stopping rule under the true law: \(K\left(m\right)=P^0\left(M_{\tau_d}\le m\right)\).
\end{definition}

\section{Equilibrium Characterization}\label{sec:eq-characterization}

The point of this section is to turn the equilibrium fixed point into a one-dimensional object. The difficulty is that the two equilibrium objects discipline each other. A drawdown boundary \(d\) determines, under the true law \(\mathbb P^0\), how far agents climb before stopping and, hence, what the realized rank distribution is. At the same time, the conjectured rank schedule \(K\) determines, under the perceived law \(\mathbb P^\mu\), how valuable it is for an individual agent to continue searching. The characterization below separates these two roles before putting them back together.

We begin with the distributional side. Suppose agents stop when the current gap between the best option found and the current option reaches \(d(M_t)\). Conditional on having reached a record \(m\), a larger value of \(d(m)\) makes it less likely that the agent stops before reaching still higher records. In fact, over a small interval \([m,m+\Delta]\), the stopping probability is approximately \(\Delta/d(m)\). The survival probability of the stopped maximum is the accumulation of these local exit rates.

Toward stating our first result, given \(d\), define \(A_d(m)\coloneqq\int_0^m\frac{1}{d(r)}dr\), and denote \(T_m\coloneqq\inf\left\{t\ge0\colon M_t\ge m\right\}\).

\begin{lemma}\label{lem:survival} Let drawdown \(d\in C^1\left([0,\infty)\right)\) be positive and locally bounded away from zero. Then \(\mathbb P^0\left(T_m<\tau_d\right)=\exp\left(-A_d(m)\right)\). If \(A_d(m)\to\infty\) as \(m\to\infty\), then \(\tau_d<\infty\) \(\mathbb P^0\)-almost surely and \(\mathbb P^0\left(M_{\tau_d}\ge m\right)=\exp\left(-A_d(m)\right)\). Moreover, \(K_d(m)\coloneqq1-\exp\left(-A_d(m)\right)\) is the induced CDF of \(M_{\tau_d}\).
\end{lemma}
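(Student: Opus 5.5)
We will prove the survival formula with a martingale built from the running maximum and the drawdown (Azéma–Yor/Lehoczky type), and then obtain the remaining claims as corollaries. Under \(\mathbb P^0\), \(X=\sigma W\) is a martingale, and \(M\) increases only on \(\{X_t=M_t\}\), i.e.\ when \(D_t=0\). Consider the candidate process
\[
Y_t\coloneqq \exp\left(A_d(M_t)\right)\left(1-\frac{M_t-X_t}{d(M_t)}\right).
\]
Write \(g(m)\coloneqq e^{A_d(m)}\) and \(F(m,x)\coloneqq g(m)\bigl(1-(m-x)/d(m)\bigr)\). Then \(F\) is affine in \(x\), so \(F_{xx}=0\). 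At \(x=m\) we have
\[
F_m(m,m)=g'(m)-\frac{g(m)}{d(m)}=\frac{g(m)}{d(m)}-\frac{g(m)}{d(m)}=0,
\]
because \(g'=g/d\) and the term involving \(d'\) carries the factor \((m-x)=0\). This is where \(d\in C^1\) enters, since \(F\) must be \(C^{1}\) in \(m\).

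By Itô's formula for functions of \((M_t,X_t)\), using that \(M\) is continuous, increasing, and increases only on \(\{X=M\}\), we get
\[
dY_t=F_x(M_t,X_t)\,dX_t=\frac{g(M_t)}{d(M_t)}\,\sigma\, dW_t .
\]
Hence \(Y\) is a local martingale. Stopping at \(\sigma_m\coloneqq\tau_d\wedge T_m\), the pair \((M,X)\) stays in the compact region \(M\in[0,m]\), \(0\le D\le d(M)\). Since \(d\) is locally bounded away from zero and continuous, \(Y\) is bounded there. Moreover \(\sigma_m<\infty\) a.s.: before \(\sigma_m\), \(X\) is confined to \([-\sup_{[0,m]}d,\,m]\), which a driftless Brownian motion exits a.s. The stopped process is therefore a bounded martingale, and optional stopping gives
\[
\mathbb E^0\bigl[Y_{\sigma_m}\bigr]=Y_0=1 .
\]
On \(\{\tau_d<T_m\}\) we have \(D=d(M)\), so \(Y=0\). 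On \(\{T_m<\tau_d\}\) (or \(T_m=\tau_d\), a null event unless \(d\) is degenerate) we have \(D=0\) and \(M=m\), so \(Y=e^{A_d(m)}\). Thus
\[
\mathbb P^0\left(T_m<\tau_d\right)=e^{-A_d(m)}.
\]
The tie \(T_m=\tau_d\) would require \(D_{T_m}=d(m)>0\), which is impossible since \(D_{T_m}=0\), so the ordering is unambiguous.

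For the remaining claims, note first that \(\{\tau_d=\infty\}\subseteq\{T_m<\tau_d\}\) for every \(m\). This holds because a driftless Brownian motion has \(\sup X=\infty\) a.s., so \(T_m<\infty\). Consequently \(\mathbb P^0(\tau_d=\infty)\le e^{-A_d(m)}\to0\). Next, since \(M\) is continuous, \(\{M_{\tau_d}\ge m\}=\{T_m\le\tau_d\}=\{T_m<\tau_d\}\) for \(m>0\), which yields \(\mathbb P^0(M_{\tau_d}\ge m)=e^{-A_d(m)}\). Because \(A_d\) is continuous, \(M_{\tau_d}\) has no atoms on \((0,\infty)\), and \(A_d(0)=0\) gives no atom at \(0\) either. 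Hence \(K_d(m)=1-e^{-A_d(m)}\) is its CDF, and it is a genuine CDF because \(A_d\) increases to \(\infty\).

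The main obstacle is justifying Itô's formula with the reflected term, i.e.\ the vanishing of the \(dM\) contribution. The plan handles this with the identity \(F_m(m,m)=0\) together with the support property of \(dM\), which requires \(F\) to be \(C^{1}\) in \(m\), hence \(d\in C^1\). A secondary point is the integrability needed for optional stopping, and the localization at \(T_m\) handles it.
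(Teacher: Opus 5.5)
Your proposal is correct and is essentially the paper's proof. Your \(Y_t=e^{A_d(M_t)}\bigl(1-D_t/d(M_t)\bigr)\) is exactly the paper's \(N_t=G(M_t)-g(M_t)D_t\), and both arguments use the same Itô computation, in which the \(dM\) term vanishes because \(F_m(m,m)=0\) (equivalently \(D_t\,dM_t=0\)), followed by optional stopping of the bounded process at \(T_m\wedge\tau_d\) and the same derivation of the remaining claims. The only cosmetic difference is that you get finiteness of \(T_m\wedge\tau_d\) from \(X\) exiting a bounded interval, whereas the paper uses \(T_m<\infty\) almost surely directly.
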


The proof is in \Cref{app:firstlem}. The lemma says that the inverse drawdown \(1/d(m)\) is the local rate at which agents exit before pushing their maximum past \(m\). Thus, a drawdown boundary is not only an individual stopping rule; under the true law, it is also a recipe for the cross-sectional rank distribution.

We next turn to the individual optimality side. Fix a conjectured rank schedule \(K\). If an agent stops with maximum \(m\), her effective prize is \(\phi_K(m)=m+y\left(K(m)\right)\). In \Cref{app:aux}, we show that, within the regular drawdown class, optimality for this fixed-\(K\) problem is characterized by the free-boundary equation
\[
d'(m)=1-\frac{\mu}{c}\frac{\phi_K'(m)}{1-\exp\left(-\frac{2\mu}{\sigma^2}d(m)\right)}.
\]
When \(K\) is differentiable, \(\phi_K'(m)=1+y'\left(K(m)\right)K'(m)\). This object decomposes in a straightforward manner: the term \(1\) is the direct marginal value of raising the best option found. The second term \(y'\left(K(m)\right)K'(m)\) is the marginal rank value: a slightly higher maximum improves the agent's position in the population, and this is valuable because \(y'>0\). This is the channel through which relative concerns affect the stopping boundary.

The limiting boundary has a simple interpretation. When the marginal rank term vanishes, the agent behaves like a single-agent experimenter whose marginal prize from a higher maximum is just one. A finite limiting drawdown must then converge to \(\bar d_\mu\), where
\[
1-\exp\left(-\frac{2\mu}{\sigma^2}\bar d_\mu\right)=\frac{\mu}{c}.
\]
This value will be the terminal condition for the equilibrium characterization.

We now impose the consistency necessitated by equilibrium. That is to say, in equilibrium, the conjectured rank schedule must equal the true distribution generated by the optimal boundary: \(K=K_d\). From \Cref{lem:survival}, this means \(K(m)=1-\exp\left(-A_d(m)\right)\). Rather than solve directly for both \(K\) and \(d\), we use \(t=A_d(m)\) as the coordinate. In this coordinate, the agent with maximum \(m(t)\) has rank \(K(m(t))=1-e^{-t}\). Define the quantile-indexed drawdown \(\delta(t)\coloneqq d(m(t))\). Since \(A_d(m(t))=t\), we differentiate to obtain \(m'(t)/d(m(t))=1\), or simply \(m'(t)=\delta(t)\). Consequently, the same function that records the drawdown tolerated at each rank also specifies the slope of the equilibrium quantile function.

The rank term also simplifies after this change of variables. Since \(K(m(t))=1-e^{-t}\), we can again differentiate to get \(K'(m(t))m'(t)=e^{-t}\). Using \(m'(t)=\delta(t)\), we obtain \(K'(m(t))=e^{-t}/\delta(t)\). Therefore, the marginal prize at the maximum \(m(t)\) is
\[
\phi_K'(m(t))=1+\frac{b(t)}{\delta(t)},\qquad \text{with} \qquad b(t)\coloneqq e^{-t}y'\left(1-e^{-t}\right).
\]
Substituting this expression into the fixed-\(K\) boundary equation at \(m=m(t)\) and using \(\delta'(t)=d'(m(t))m'(t)=d'(m(t))\delta(t)\), any regular equilibrium must satisfy
\[
\delta'(t)=\delta(t)-\frac{\mu}{c}\frac{\delta(t)+b(t)}{1-\exp\left(-\frac{2\mu}{\sigma^2}\delta(t)\right)}.
\]
This is the key reduction. The equilibrium fixed point in the two objects \(K\) and \(d\) becomes a terminal-value problem for the single function \(\delta\). Our theorem shows that this reduction is complete: the equation has a unique bounded positive solution, and that solution reconstructs the unique regular drawdown equilibrium.

\begin{theorem}\label{thm:equilibrium-characterization} Fix \(\mu\in\mathcal M\). There exists a unique regular drawdown equilibrium at perceived drift \(\mu\). It is characterized by the unique bounded positive \(C^1\) solution \(\delta_\mu\) of
\[
\delta_\mu'(t)=\delta_\mu(t)-\frac{\mu}{c}\frac{\delta_\mu(t)+b(t)}{1-\exp\left(-\frac{2\mu}{\sigma^2}\delta_\mu(t)\right)}
\]
satisfying \(\lim_{t\to\infty}\delta_\mu(t)=\bar d_\mu\), where \(b(t)\coloneqq e^{-t}y'\left(1-e^{-t}\right)\). Moreover, if \(\alpha_\mu\coloneqq1+y'(0)/\bar d_\mu\) and \(d_\mu^+\coloneqq\sigma^2\log\left(c/(c-\mu\alpha_\mu)\right)/(2\mu)\), then \(\bar d_\mu\le\delta_\mu(t)\le d_\mu^+\) for all \(t\ge0\). The equilibrium quantile map \(m_\mu(t)\coloneqq K_\mu^{-1}\left(1-e^{-t}\right)\) satisfies \(m_\mu'(t)=\delta_\mu(t)\).
\end{theorem}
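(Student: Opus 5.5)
The proof has three parts: analyze the terminal-value ODE as a dynamical system, map its bounded solutions back to equilibria, and then pass from equilibria back to the ODE. Write the right-hand side as
\[
F(t,\delta)\coloneqq\delta-\frac{\mu}{c}\frac{\delta+b(t)}{g(\delta)},\qquad g(\delta)\coloneqq1-\exp\left(-\tfrac{2\mu}{\sigma^2}\delta\right).
\]
The function \(\delta\mapsto\delta/g(\delta)\) is increasing and convex, and \(1/g\) is decreasing.

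\emph{Step 1: properties of \(F\).} At the lower level \(\delta=\bar d_\mu\) we have \(g(\bar d_\mu)=\mu/c\). Hence
\[
F(t,\bar d_\mu)=\bar d_\mu-\bar d_\mu-b(t)=-b(t)\le0.
\]
At the upper level \(d_\mu^+\) we have \(g(d^+_\mu)=\mu\alpha_\mu/c\). Using \(b(t)\le y'(0)\), which holds because \(y'\) is nonincreasing and \(e^{-t}\le1\), we get
\[
F(t,d^+_\mu)\ge d^+_\mu\left(1-\frac{1+y'(0)/d^+_\mu}{\alpha_\mu}\right)\ge0,
\]
since \(d^+_\mu\ge\bar d_\mu\). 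Here \(\mu\in\mathcal M\) is exactly what makes \(d^+_\mu\) finite. Moreover, \(\partial_\delta F=1-\frac{\mu}{c}\,\partial_\delta\!\left[(\delta+b)/g\right]\). Because \((\delta+b)/g\) has derivative at most \(\frac{1}{g}\cdot\)(something) and \(1/g\) is decreasing, I expect \(\partial_\delta F>0\) on \([\bar d_\mu,d^+_\mu]\), with a uniform lower bound \(\partial_\delta F\ge\kappa>0\) coming from \(g(\delta)\ge\mu/c\). That is, the ODE is \emph{repelling forward in time}. Uniqueness of the bounded solution follows from this: two bounded solutions satisfy \((\delta_1-\delta_2)'\ge\kappa(\delta_1-\delta_2)\) wherever \(\delta_1>\delta_2\), so their difference grows exponentially and contradicts boundedness. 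The same argument excludes bounded positive solutions that leave the strip.

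\emph{Step 2: existence in the strip.} I will use a backward shooting argument of Ważewski type. For each \(T\), solve the ODE backward from \(\delta(T)=\bar d_\mu\). Since \(F\le0\) on the lower edge and \(F\ge0\) on the upper edge, backward solutions are trapped in \([\bar d_\mu,d^+_\mu]\) on \([0,T]\): moving backward, they are pushed away from the edges, inward. The family \(\delta^T\) is uniformly bounded and equi-Lipschitz, so Arzelà–Ascoli gives a limit as \(T\to\infty\), and this limit solves the ODE on \([0,\infty)\) and stays in the strip. For the limit condition, note that \(b(t)\to0\), so the autonomous limit equation has \(\bar d_\mu\) as its unique, repelling equilibrium. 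A bounded solution must therefore converge to it; otherwise \(\delta'\) would stay bounded away from zero along a subsequence, by a standard \(\liminf/\limsup\) argument using the monotonicity in \(\delta\). This step yields \(\delta_\mu\) together with both bounds.

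\emph{Step 3: equivalence with equilibrium.} This is the main obstacle.
\begin{itemize}
\item \emph{From the ODE to an equilibrium.} Given \(\delta_\mu\), set \(m_\mu(t)=\int_0^t\delta_\mu\). Then \(m_\mu\) is a bijection onto \([0,\infty)\) because \(\delta_\mu\ge\bar d_\mu\). Define \(d=\delta_\mu\circ m_\mu^{-1}\) and \(K=K_d\). By \Cref{lem:survival}, \(A_d\to\infty\) and \(K\) is the realized CDF, so the consistency condition holds. Next, \(d\) is \(C^1\), positive, bounded, bounded away from zero, and converges to \(\bar d_\mu\). Also \(\phi_K'=1+b/\delta\) is continuous, so \(\phi_K\in C^1\). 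Reversing the change of variables shows that \(d\) satisfies the fixed-\(K\) free-boundary equation. I then invoke the characterization of regular optimal boundaries in \Cref{app:aux}: solve the regular \(V\) explicitly on \(C_d\), and verify optimality by Itô's formula plus a check that \(V\ge\phi_K\) and that \(V\) is superharmonic in the stopping region. The last check should use \(d'\le1\), or the sign structure of the boundary ODE, and it will need the admissibility integrability to control local martingales.
\item \emph{From an equilibrium to the ODE.} For uniqueness, any regular equilibrium \(d\) satisfies the fixed-\(K\) equation by \Cref{app:aux}. The reduction preceding the theorem then produces a bounded positive \(C^1\) solution \(\delta\). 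Its finite limit must equal \(\bar d_\mu\), since that is the only zero of the limiting autonomous \(F\). Step 1 then forces \(\delta=\delta_\mu\), and \(m'_\mu=\delta_\mu\) holds by construction.
\end{itemize}
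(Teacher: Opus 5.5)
Your proposal is correct and follows essentially the paper's route: backward finite-horizon problems from \(\delta(T)=\bar d_\mu\), trapped in \([\bar d_\mu,d_\mu^+]\) by the signs of \(F\) on the two edges; an Arzel\`a--Ascoli limit; uniqueness from the forward repulsion \(\partial_\delta F>0\); reconstruction via \Cref{lem:survival} and \Cref{prop:fixed-K-verification}; and the converse via \Cref{lem:free-boundary-necessity} plus the quantile reduction. The differences are minor: you get \(\delta_\mu\to\bar d_\mu\) by a qualitative escape argument where the paper proves an explicit bound \(\delta_T(t)-\bar d_\mu=O(e^{-t})\) uniform in \(T\), your uniqueness argument does not need the terminal condition, and when you invoke \Cref{prop:fixed-K-verification} you should also check its hypothesis \(\frac{\mu}{c}\sup_m\phi_{K_\mu}'(m)\le\frac{\mu}{c}(1+y'(0)/\bar d_\mu)<1\), which uses \(\mu\in\mathcal M\) a second time.
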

Equivalently, once \(\delta_\mu\) is known, we can recover the equilibrium by setting \(m_\mu(t)=\int_0^t\delta_\mu(s)ds\), \(K_\mu(m_\mu(t))=1-e^{-t}\), and \(d_\mu(m_\mu(t))=\delta_\mu(t)\).

The proof is in \Cref{ap:maintheoremproof}. The theorem supplies a compact way of understanding the equilibrium. At rank \(1-e^{-t}\), the agent tolerates a drawdown of size \(\delta_\mu(t)\). Because \(m_\mu'(t)=\delta_\mu(t)\), the same object also determines how spread out equilibrium outcomes are across adjacent ranks. The function \(b(t)\) is the marginal rank incentive in this coordinate. The factor \(y'\left(1-e^{-t}\right)\) is the marginal utility of rank, while the factor \(e^{-t}\) converts a change in the hazard coordinate into a change in rank. Since \(b(t)\to0\), rank incentives vanish at the top of the distribution, and the equilibrium drawdown converges to the single-agent benchmark \(\bar d_\mu\).

\section{Comparative Statics}\label{sec:comparative-statics}

This section contains the main comparative statics. We compare economies that differ only in perceived drift \(\mu\). Crucially we evaluate the underlying process and all realized outcomes under the true law \(\mathbb P^0\). The characterization in \Cref{sec:eq-characterization} makes the comparison transparent. Each equilibrium is summarized by the quantile-indexed drawdown \(\delta_\mu\), and the equilibrium quantile map satisfies \(m_\mu'(t)=\delta_\mu(t)\). Thus, once we know how optimism changes \(\delta_\mu\), we immediately know how it changes the distribution of stopped maxima.

The main force is simple. A more optimistic agent believes continued experimentation is more promising, so she is willing to continue after a larger decline from her personal best. In equilibrium, this greater tolerance for setbacks appears at every finite hazard-time quantile.
\begin{lemma}\label{lem:q-mono}
Let \(\mu_1,\mu_2\in\mathcal M\) with \(\mu_1<\mu_2\). Then \(\delta_{\mu_2}(t)>\delta_{\mu_1}(t)\) for every finite \(t\ge0\). Equivalently, \(m_{\mu_2}'(t)>m_{\mu_1}'(t)\) for every finite \(t\ge0\).
\end{lemma}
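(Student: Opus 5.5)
The plan is a one-crossing comparison argument for the terminal-value problem in \Cref{thm:equilibrium-characterization}. Write the right-hand side as
\[
F_\mu(t,\delta)\coloneqq \delta-\frac{\delta+b(t)}{c}\,g_\mu(\delta),\qquad g_\mu(\delta)\coloneqq \frac{\mu}{1-\exp\left(-\frac{2\mu}{\sigma^2}\delta\right)},
\]
so that \(\delta_\mu'(t)=F_\mu(t,\delta_\mu(t))\). I will prove two monotonicity facts in \(\mu\). First, at the terminal condition. Second, in the vector field.

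\textbf{Monotonicity of the terminal condition.} Put \(u=\mu/c\in(0,1)\). Then
\[
\bar d_\mu=\frac{\sigma^2}{2c}\cdot\frac{\log\left(1/(1-u)\right)}{u}.
\]
The map \(u\mapsto -\log(1-u)/u=\sum_{k\ge1}u^{k-1}/k\) is strictly increasing on \((0,1)\). Hence \(\bar d_{\mu_2}>\bar d_{\mu_1}\).

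\textbf{Monotonicity of the vector field.} Fix \(\delta>0\) and set \(x=2\mu\delta/\sigma^2\). Then
\[
g_\mu(\delta)=\frac{\sigma^2}{2\delta}\cdot\frac{x}{1-e^{-x}}.
\]
The function \(x\mapsto x/(1-e^{-x})\) is strictly increasing on \((0,\infty)\). Indeed, its derivative has the sign of \(1-e^{-x}-xe^{-x}=e^{-x}(e^{x}-1-x)>0\). So \(g_\mu(\delta)\) is strictly increasing in \(\mu\). Since \(\delta+b(t)>0\) (because \(y'>0\)), it follows that \(F_{\mu_2}(t,\delta)<F_{\mu_1}(t,\delta)\) for every \(t\ge0\) and \(\delta>0\).

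\textbf{The crossing argument.} Let \(\Delta(t)\coloneqq\delta_{\mu_2}(t)-\delta_{\mu_1}(t)\). Both solutions are positive, \(C^1\), and bounded below by \(\bar d_{\mu_i}>0\) by \Cref{thm:equilibrium-characterization}, so \(\Delta\) is \(C^1\). At any \(s\) with \(\Delta(s)=0\), write \(\delta\) for the common value. Then
\[
\Delta'(s)=F_{\mu_2}(s,\delta)-F_{\mu_1}(s,\delta)<0.
\]
Thus \(\Delta\) can cross zero only downward. Suppose, toward a contradiction, that \(\Delta(t_0)\le0\) for some finite \(t_0\).
\begin{itemize}
\item If \(\Delta(t_0)=0\), then \(\Delta<0\) on some interval \((t_0,t_0+\varepsilon)\).
\item In either case, \(\Delta\) can never return to zero afterwards. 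At a first return time \(s>t_0\) we would have \(\Delta<0\) just before \(s\) and \(\Delta(s)=0\), forcing \(\Delta'(s)\ge0\), which contradicts \(\Delta'(s)<0\).
\end{itemize}
Hence \(\Delta(t)<0\) for all \(t>t_0\), and so \(\lim_{t\to\infty}\Delta(t)\le0\). But \(\lim_{t\to\infty}\Delta(t)=\bar d_{\mu_2}-\bar d_{\mu_1}>0\) by the first step, a contradiction. Therefore \(\delta_{\mu_2}(t)>\delta_{\mu_1}(t)\) for every finite \(t\ge0\).

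\textbf{The equivalent statement.} This follows immediately from \(m_\mu'(t)=\delta_\mu(t)\) in \Cref{thm:equilibrium-characterization}.

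The only delicate point is that the ODE is pinned down at \(t=\infty\) rather than at \(t=0\), so the comparison must run backward from infinity. The argument above handles this by propagating a hypothetical violation forward and contradicting the strict ordering of the limits. This is why the strict inequality \(\bar d_{\mu_2}>\bar d_{\mu_1}\) is essential. The rest only needs the existence and limits of \(\delta_\mu\) supplied by the theorem.
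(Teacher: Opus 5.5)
Your proposal is correct and follows the paper's proof: both prove \(\bar d_{\mu_2}>\bar d_{\mu_1}\) via the monotonicity of \(\log\left(1/(1-x)\right)/x\), show \(\delta_{\mu_2}-\delta_{\mu_1}\) has strictly negative derivative at any zero because \(x\mapsto x/(1-e^{-x})\) is strictly increasing, and then run a one-crossing argument against the ordering of the limits. The only difference is packaging: the paper takes the last zero \(t^\ast=\sup\{s\colon w(s)\le0\}\), whereas you propagate a violation forward to contradict \(\lim_{t\to\infty}\Delta(t)>0\); the two are equivalent.
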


The proof is in the appendix. The intuition comes directly from the equation in \Cref{thm:equilibrium-characterization}. The rank term \(b(t)=e^{-t}y'\left(1-e^{-t}\right)\) is the same across economies, while a larger \(\mu\) raises the perceived value of continued search and also raises the limiting single-agent drawdown \(\bar d_\mu\). Our argument in the appendix shows that this ordering cannot be undone at any finite \(t\). Since \(m_\mu'(t)=\delta_\mu(t)\), the more optimistic economy has a steeper true-population quantile map everywhere.

Now let \(M_\mu\coloneqq M_{\tau_{d_\mu}}\) denote the equilibrium stopped maximum under the true law. The previous lemma compares quantile slopes. The next lemma converts that comparison into a distributional statement by representing stopped maxima directly in quantile coordinates.

\begin{lemma}\label{lem:exp-quantile}
Let \(Z\sim\mathrm{Exp}(1)\). Then under \(\mathbb P^0\), \(M_\mu\stackrel{d}{=}m_\mu(Z)\).
\end{lemma}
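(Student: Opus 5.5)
The plan is to combine \Cref{lem:survival} with the change of variables behind \Cref{thm:equilibrium-characterization}, so that the claim reduces to the inverse-CDF (quantile) representation of a random variable.

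First, let \(d_\mu\) be the equilibrium boundary from \Cref{thm:equilibrium-characterization}. It is \(C^1\), positive, bounded, and bounded away from zero. We check that \(A_{d_\mu}(m)=\int_0^m dr/d_\mu(r)\to\infty\): since \(d_\mu\le d_\mu^+<\infty\), we have \(A_{d_\mu}(m)\ge m/d_\mu^+\). \Cref{lem:survival} then applies, so \(\tau_{d_\mu}<\infty\) \(\mathbb P^0\)-a.s. and
\[
\mathbb P^0\left(M_\mu\ge m\right)=\exp\left(-A_{d_\mu}(m)\right).
\]
In particular \(M_\mu\) has a continuous distribution with CDF \(K_\mu=1-e^{-A_{d_\mu}}\).

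Second, we identify \(m_\mu\) as the inverse of \(A_{d_\mu}\). The map \(A_{d_\mu}\colon[0,\infty)\to[0,\infty)\) is \(C^1\) with derivative \(1/d_\mu>0\), starts at \(0\), and is unbounded. It is therefore a strictly increasing bijection. Set \(m_\mu(t)=\int_0^t\delta_\mu(s)\,ds\) as in the reconstruction after the theorem. Then
\[
\frac{d}{dt}A_{d_\mu}\left(m_\mu(t)\right)=\frac{\delta_\mu(t)}{d_\mu\left(m_\mu(t)\right)}=1,
\]
using \(d_\mu(m_\mu(t))=\delta_\mu(t)\). Since \(A_{d_\mu}(m_\mu(0))=0\), we get \(A_{d_\mu}(m_\mu(t))=t\), i.e.\ \(m_\mu=A_{d_\mu}^{-1}\). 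This agrees with the definition \(m_\mu(t)=K_\mu^{-1}(1-e^{-t})\) given in the theorem.

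Third, we compute the law of \(m_\mu(Z)\). For \(m\ge0\), strict monotonicity of \(m_\mu\) gives
\[
\mathbb P\left(m_\mu(Z)\ge m\right)=\mathbb P\left(Z\ge A_{d_\mu}(m)\right)=e^{-A_{d_\mu}(m)}=\mathbb P^0\left(M_\mu\ge m\right).
\]
Both variables are supported on \([0,\infty)\), so their distributions coincide.

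The main obstacle is bookkeeping about which \(d_\mu\) is used. The equilibrium \(d_\mu\) is defined on the \(m\)-axis through \(\delta_\mu\circ A_{d_\mu}\), which looks circular. The way out is to use the equilibrium definition directly: it guarantees \(K_\mu(m)=\mathbb P^0(M_{\tau_{d_\mu}}\le m)\). It then suffices to know that \(K_\mu\) is continuous and strictly increasing, with \(K_\mu(m_\mu(t))=1-e^{-t}\), and both facts are supplied by \Cref{lem:survival} together with the theorem. The remaining points are to confirm that no atoms occur (continuity of \(A_{d_\mu}\)) and that the hypotheses of \Cref{lem:survival} hold, which the bounds \(\bar d_\mu\le\delta_\mu\le d_\mu^+\) ensure.
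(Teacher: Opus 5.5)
Your proof is correct and follows essentially the paper's route. The paper's proof is the one-line quantile transform \(\mathbb P(m_\mu(Z)\le m_\mu(t))=\mathbb P(Z\le t)=1-e^{-t}=K_\mu(m_\mu(t))\), combined with equilibrium consistency and strict monotonicity of \(m_\mu\). You run the same computation on the survival side, additionally re-deriving the law of \(M_\mu\) from \Cref{lem:survival} and checking that \(m_\mu=A_{d_\mu}^{-1}\) is a bijection of \([0,\infty)\), which makes explicit the surjectivity the paper's one-liner uses implicitly.
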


This follows immediately from \(K_\mu(m_\mu(t))=1-e^{-t}\). The random variable \(Z\) is the hazard-time rank of a randomly chosen agent in the true population, and \(m_\mu(Z)\) converts that rank into the realized stopped maximum. This representation lets us compare different economies using the same random rank draw \(Z\).

\begin{theorem}\label{thm:mean-var-mu}
Let \(\mu_1,\mu_2\in\mathcal M\) with \(\mu_1<\mu_2\). In the unique regular drawdown equilibria,
\[
M_{\mu_2}\succeq_{\mathrm{FOSD}}M_{\mu_1},
\qquad \text{and}
\qquad
\operatorname{Var}^0(M_{\mu_2})>\operatorname{Var}^0(M_{\mu_1}).
\]
\end{theorem}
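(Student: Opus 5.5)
The plan is to couple the two economies through a common rank draw and then read both claims off the quantile representation. By \Cref{lem:exp-quantile}, under \(\mathbb P^0\) we may write \(M_{\mu_i}\stackrel{d}{=}m_{\mu_i}(Z)\) with the same \(Z\sim\mathrm{Exp}(1)\) for \(i=1,2\). By \Cref{thm:equilibrium-characterization}, \(m_{\mu_i}(t)=\int_0^t\delta_{\mu_i}(s)\,ds\), so \(m_{\mu_i}(0)=0\). \Cref{lem:q-mono} gives \(\delta_{\mu_2}>\delta_{\mu_1}\) pointwise. Define the gap
\[
g(t)\coloneqq m_{\mu_2}(t)-m_{\mu_1}(t)=\int_0^t\bigl(\delta_{\mu_2}(s)-\delta_{\mu_1}(s)\bigr)ds .
\]
It satisfies \(g(0)=0\), \(g'(t)>0\) for all \(t\ge0\), and hence \(g(t)>0\) for \(t>0\).

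\textbf{First-order dominance.} Since \(m_{\mu_2}(Z)\ge m_{\mu_1}(Z)\) almost surely, with strict inequality whenever \(Z>0\), we have a monotone coupling, and \(M_{\mu_2}\succeq_{\mathrm{FOSD}}M_{\mu_1}\) follows at once. Equivalently, both quantile maps are strictly increasing because \(\delta_\mu\ge\bar d_\mu>0\). So \(K_{\mu_2}(x)\le K_{\mu_1}(x)\) for every \(x\), obtained by inverting \(m_{\mu_2}\ge m_{\mu_1}\).

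\textbf{Variances are finite.} The bounds in \Cref{thm:equilibrium-characterization} give \(m_\mu(t)\le d_\mu^+t\). Since \(Z\) has all moments, both variances are finite and every covariance below is well defined.

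\textbf{Variance comparison.} Write \(m_{\mu_2}(Z)=m_{\mu_1}(Z)+g(Z)\). Then
\[
\operatorname{Var}(m_{\mu_2}(Z))=\operatorname{Var}(m_{\mu_1}(Z))+\operatorname{Var}(g(Z))+2\operatorname{Cov}\bigl(m_{\mu_1}(Z),g(Z)\bigr).
\]
Both \(m_{\mu_1}\) and \(g\) are strictly increasing functions of the single random variable \(Z\). Chebyshev's association inequality, proved via \(\tfrac12\mathbb E[(f(Z)-f(Z'))(h(Z)-h(Z'))]\) with \(Z'\) an independent copy, then gives \(\operatorname{Cov}(m_{\mu_1}(Z),g(Z))\ge0\), with strict inequality because both functions are strictly increasing and \(Z\) is non-degenerate. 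Also \(\operatorname{Var}(g(Z))>0\) since \(g\) is strictly increasing. Hence \(\operatorname{Var}^0(M_{\mu_2})>\operatorname{Var}^0(M_{\mu_1})\).

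The variance step is the only one needing care, and the key observation is that the correct object to control is the gap \(g\) rather than the two distributions separately. A pure FOSD shift does not by itself raise variance. What does the work here is that \(g\) is increasing, i.e.\ the improvement is larger at higher ranks, which is exactly the content of \(\delta_{\mu_2}>\delta_{\mu_1}\) at every quantile. I would check that the strict version of the association inequality goes through. The integrand \((f(Z)-f(Z'))(h(Z)-h(Z'))\) is strictly positive on \(\{Z\neq Z'\}\), which has probability one, so no regularity beyond integrability is needed.
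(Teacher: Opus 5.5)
Your proposal is correct and follows the paper's proof essentially step for step. Both arguments use the common $Z\sim\mathrm{Exp}(1)$ coupling from \Cref{lem:exp-quantile} and the increasing gap $m_{\mu_2}-m_{\mu_1}$ from \Cref{lem:q-mono}, with Chebyshev's association inequality via an independent copy and finiteness of second moments from $m_\mu(t)\le d_\mu^+t$. Your strict positivity of the covariance is a harmless strengthening: the paper only needs $\operatorname{Cov}\ge 0$ together with $\operatorname{Var}(g(Z))>0$.
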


The proof is in the appendix. The idea is short. By \Cref{lem:q-mono}, \(m_{\mu_2}\) is everywhere steeper than \(m_{\mu_1}\). Since both quantile maps start at zero, \(m_{\mu_2}(t)>m_{\mu_1}(t)\) for every \(t>0\). Using the representation in \Cref{lem:exp-quantile}, the more optimistic economy, therefore, has a higher stopped maximum at every positive quantile, which drives first-order stochastic dominance and a higher mean.

The variance result comes from the same geometry. Let \(\Delta(t)\coloneqq m_{\mu_2}(t)-m_{\mu_1}(t)\). Since \(m_{\mu_2}'(t)>m_{\mu_1}'(t)\), the gain \(\Delta(t)\) is increasing in \(t\). Under the common-\(Z\) representation, the more optimistic economy can be written as \(m_{\mu_1}(Z)+\Delta(Z)\). Thus, the optimism-induced gain is largest precisely for agents who are already high in the distribution. Optimism raises the whole distribution, but it raises the upper quantiles by more in absolute terms, so dispersion increases.

The stopped maximum \(M_\mu\) is the output object compared above, but the same equilibrium boundary also determines how long agents experiment in true time. Define
\[
\tau_\mu\coloneqq\tau_{d_\mu}=\inf\left\{t\ge0\colon M_t-X_t\ge d_\mu(M_t)\right\}.
\]
In the appendix, we derive the useful identity
\[
\mathbb E^0[\tau_\mu]=\frac{1}{\sigma^2}\mathbb E^0\left[d_\mu(M_\mu)^2\right]=\frac{1}{\sigma^2}\int_0^\infty\delta_\mu(t)^2e^{-t}dt.
\]
This formula clarifies the source of the output effect. Optimists do not obtain higher stopped maxima because the true process has a higher drift; all realized comparisons are under \(\mathbb P^0\). They obtain higher stopped maxima because they tolerate larger drawdowns and so search longer. The weight \(e^{-t}\) is the density of the hazard-time rank \(Z\), while the term \(\delta_\mu(t)^2/\sigma^2\) reflects the quadratic time scale of Brownian fluctuations.

\begin{proposition}\label{prop:true-stopping-times}
If \(\mu_1,\mu_2\in\mathcal M\) and \(\mu_1<\mu_2\), then
\(\mathbb E^0\left[\tau_{\mu_2}\right]>\mathbb E^0\left[\tau_{\mu_1}\right]\).
\end{proposition}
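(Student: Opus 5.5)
The plan has two parts. First, derive the identity \(\mathbb E^0[\tau_\mu]=\sigma^{-2}\int_0^\infty\delta_\mu(t)^2e^{-t}\,dt\), which the text quotes from the appendix. Then apply \Cref{lem:q-mono} pointwise inside that integral.

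\emph{The identity.} Under \(\mathbb P^0\), the process \(X_t^2-\sigma^2 t\) is a martingale. The key step is to construct a better-adapted martingale. For a \(C^1\) function \(f\), consider
\[
F(X_t,M_t)=(M_t-X_t)^2+f(M_t)\bigl(M_t-X_t\bigr)-g(M_t),
\]
and use Itô's formula together with the fact that \(dM_t\) charges only the set \(\{X_t=M_t\}\). At those times the drawdown is zero, so the \(dM\)-coefficient reduces to \(f(M_t)-g'(M_t)\). A natural choice is \(f=0\): then \((M_t-X_t)^2-\sigma^2 t\) is a local martingale, because \(D^2\) has zero derivative at \(D=0\) and the reflection contributes nothing.

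Localize and stop at \(\tau_d\wedge n\). The drawdown is bounded by \(\sup d<\infty\) before \(\tau_d\) (the boundary \(d_\mu\) is bounded by \Cref{thm:equilibrium-characterization}), so \(D\) is bounded on \([0,\tau_d]\). Hence \(\mathbb E^0[D_{\tau_d\wedge n}^2]=\sigma^2\mathbb E^0[\tau_d\wedge n]\). Letting \(n\to\infty\), monotone convergence on the right and bounded convergence on the left (using \(\tau_d<\infty\) a.s. from \Cref{lem:survival}) give
\[
\sigma^2\mathbb E^0[\tau_d]=\mathbb E^0[D_{\tau_d}^2].
\]
By path continuity, \(D_{\tau_d}=d(M_{\tau_d})\). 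This yields the first equality.

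\emph{Change of variables.} For the second equality, use \Cref{lem:exp-quantile}: \(M_\mu\stackrel{d}{=}m_\mu(Z)\) with \(Z\sim\mathrm{Exp}(1)\), and \(d_\mu(m_\mu(t))=\delta_\mu(t)\). Therefore \(\mathbb E^0[d_\mu(M_\mu)^2]=\mathbb E[\delta_\mu(Z)^2]=\int_0^\infty\delta_\mu(t)^2e^{-t}\,dt\). The integral is finite because \(\delta_\mu\le d_\mu^+\).

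\emph{Comparison.} \Cref{lem:q-mono} gives \(\delta_{\mu_2}(t)>\delta_{\mu_1}(t)>0\) for every finite \(t\), so \(\delta_{\mu_2}^2>\delta_{\mu_1}^2\) pointwise. The weight \(e^{-t}\) is strictly positive, so integrating the strict pointwise inequality against it gives a strict inequality between the finite integrals. This is exactly \(\mathbb E^0[\tau_{\mu_2}]>\mathbb E^0[\tau_{\mu_1}]\).

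\emph{Main obstacle.} The main care point is the martingale step. One must justify the Itô computation for the reflected pair \((M,D)\): the local time at zero drops out of \(D^2\), because \(d(D^2)=2D\,dD+\sigma^2dt\) and \(dD\) contains \(dM\), which is paid only when \(D=0\). One must also justify passing to the limit as \(n\to\infty\). The boundedness of \(d_\mu\) makes both steps routine. The same bound guarantees \(\mathbb E^0[\tau_\mu]<\infty\), which is needed for the identity to be meaningful.
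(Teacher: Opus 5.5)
Your proof is correct, but it obtains the identity \(\mathbb E^0[\tau_\mu]=\sigma^{-2}\mathbb E^0[d_\mu(M_\mu)^2]\) by a genuinely different route from the paper.

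The paper imports the Avram--Li--Li joint Laplace transform of \((\tau_\mu,M_{\tau_\mu})\) for general drawdown times. Applying it requires checking that \(m\mapsto m-d_\mu(m)\) is increasing (i.e.\ \(d_\mu'<1\)), and that the weak-inequality and strict-inequality hitting times agree. The paper then specializes the formula to driftless Brownian motion. It expands in \(q\) near zero, with a domination argument to justify differentiating under the integral. This yields \(\mathbb E^0[\tau_\mu\mid M_{\tau_\mu}=m]=\frac{1}{3\sigma^2}\bigl[d_\mu(m)^2+2\int_0^m d_\mu(u)\,du\bigr]\), and an integration by parts finishes the identity.

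You instead note that \(D_t^2-\sigma^2t\) is a local martingale under \(\mathbb P^0\), because the reflection term \(2D_t\,dM_t\) vanishes. You then stop at \(\tau_{d_\mu}\wedge n\), where the integrand is bounded, and pass to the limit by monotone and bounded convergence. (The general \(F\) with \(f,g\) is an unnecessary detour: taking \(f=0\) forces \(g\) to be constant, so you may as well start from \(D^2\).)

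Your argument is shorter and self-contained. It uses only boundedness of \(d_\mu\) and \(\tau_\mu<\infty\) a.s.\ from \Cref{lem:survival}. It also gives finiteness of \(\mathbb E^0[\tau_\mu]\le(\sup d_\mu)^2/\sigma^2\) for free. More generally, it shows \(\sigma^2\mathbb E^0[\tau]=\mathbb E^0[D_\tau^2]\) for any stopping time with bounded drawdown, whatever the shape of the boundary.

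The paper's heavier machinery buys information the proposition does not need:
\begin{itemize}
\item the full joint law of search time and stopped maximum;
\item the conditional mean search time given the realized discovery;
\item an independent re-derivation of the stopped-maximum distribution in \Cref{lem:survival} as the \(q\downarrow0\) limit.
\end{itemize}

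The passage to hazard-time coordinates and the final strict comparison via \Cref{lem:q-mono} are the same in both proofs.
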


The result follows from the identity above and \Cref{lem:q-mono}: higher optimism raises \(\delta_\mu(t)\) at every finite \(t\), so it raises the population average of squared drawdown tolerances. Thus, the same behavioral change that raises stopped maxima also lengthens true experimentation time.

Together, these results capture both the benefit and the hazard of optimism. Greater optimism raises realized output under the true law, but it does so by lengthening search and steepening the equilibrium quantile function. The gains are, therefore, uneven across ranks: the whole distribution improves, while dispersion increases. On to the empirics.


\clearpage

\section{Empirical Evidence}\label{sec:empirics}



\begin{figure}[h!]
\centering
\includegraphics[width=0.48\linewidth]{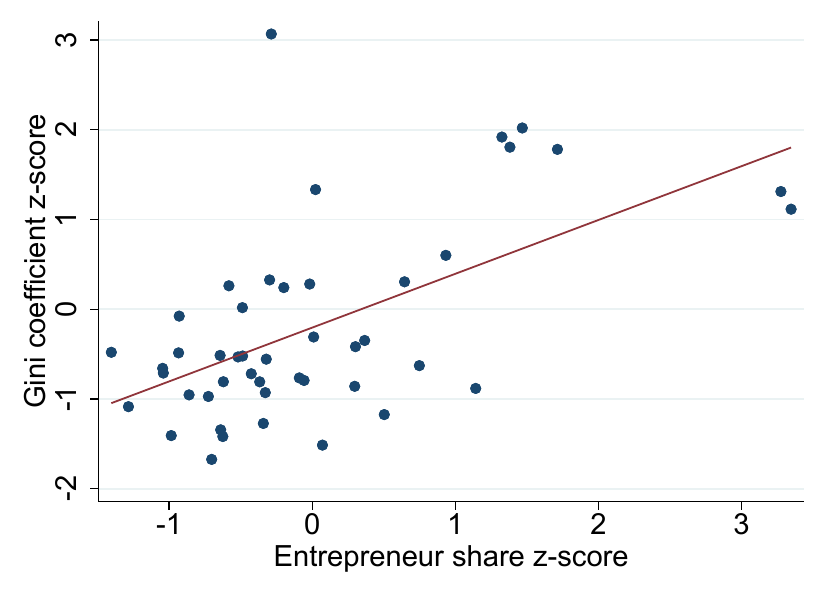}
\includegraphics[width=0.48\linewidth]{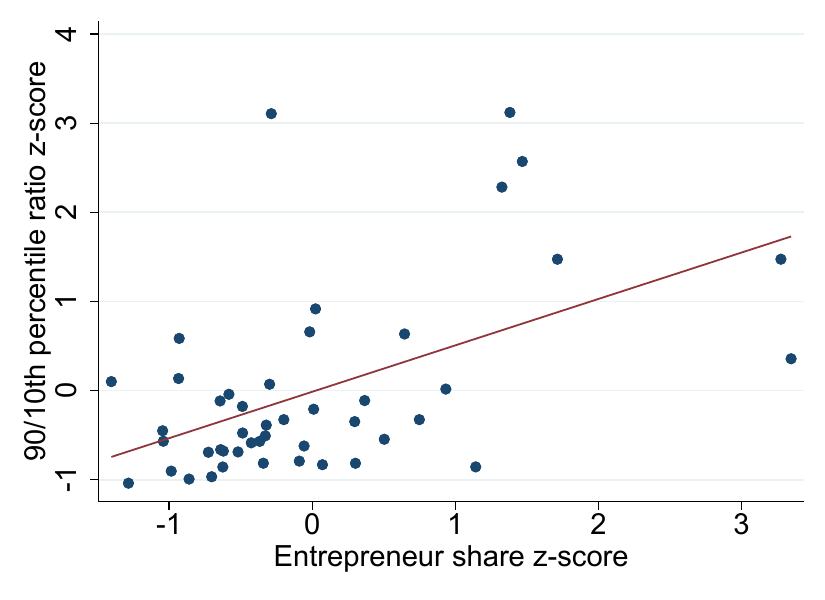}
\caption{Entrepreneurship and Inequality} 
\label{fig:fig1}
\vspace{-8mm}
\begin{singlespace} \begin{flushleft} \footnotesize These figures show country-level correlations between standardized measures of inequality as measured by the Gini index (left panel) or 90/10th percentile ratio (right panel).
\end{flushleft} \end{singlespace}
\end{figure}

\clearpage



\begin{landscape}

\begin{table}[h!]
\centering 
{
\def\sym#1{\ifmmode^{#1}\else\(^{#1}\)\fi}
\begin{tabular}{l*{8}{c}}
\hline\hline
                    &\multicolumn{4}{c}{Gini coefficient}                           &\multicolumn{4}{c}{90/10th percentile ratio}                   \\
                    &\multicolumn{1}{c}{(1)}   &\multicolumn{1}{c}{(2)}   &\multicolumn{1}{c}{(3)}   &\multicolumn{1}{c}{(4)}   &\multicolumn{1}{c}{(5)}   &\multicolumn{1}{c}{(6)}   &\multicolumn{1}{c}{(7)}   &\multicolumn{1}{c}{(8)}   \\
\hline
Entrepreneurship share&       0.599***&       0.454** &       0.620***&       0.469** &       0.521***&       0.379*  &       0.531***&       0.386*  \\
                    &     (0.130)   &     (0.181)   &     (0.123)   &     (0.178)   &     (0.133)   &     (0.200)   &     (0.132)   &     (0.202)   \\
[1em]
Log GDP per capita  &               &      -0.409** &               &      -0.334*  &               &      -0.223   &               &      -0.187   \\
                    &               &     (0.185)   &               &     (0.189)   &               &     (0.205)   &               &     (0.215)   \\
[1em]
Log population      &               &               &       0.186** &       0.123   &               &               &       0.099   &       0.059   \\
                    &               &               &     (0.074)   &     (0.085)   &               &               &     (0.079)   &     (0.096)   \\
\hline
Observations        &          46   &          34   &          46   &          34   &          46   &          34   &          46   &          34   \\
R\textsuperscript{2}&       0.327   &       0.350   &       0.415   &       0.392   &       0.259   &       0.183   &       0.285   &       0.193   \\
\hline\hline
\end{tabular}
}

\caption{Entrepreneurship and Inequality}
\label{tab:new_entrep_regs}
\vspace{-8mm}
\begin{singlespace} \begin{flushleft} \footnotesize This table shows a positive relationship between inequality and entrepreneurship after controlling for GDP per capita and population.
\end{flushleft} \end{singlespace}
\end{table} 

\end{landscape}

\clearpage


\begin{table}[h!]
\centering 
{
\def\sym#1{\ifmmode^{#1}\else\(^{#1}\)\fi}
\begin{tabular}{l*{4}{c}}
\hline\hline
                    &\multicolumn{1}{c}{(1)}   &\multicolumn{1}{c}{(2)}   &\multicolumn{1}{c}{(3)}   &\multicolumn{1}{c}{(4)}   \\
\hline
Willingness to take risks&       0.064   &               &               &               \\
                    &     (0.172)   &               &               &               \\
[1em]
Uncertainty avoidance index&               &       0.168   &               &               \\
                    &               &     (0.169)   &               &               \\
[1em]
Equity share of household assets&               &               &       0.292   &               \\
                    &               &               &     (0.233)   &               \\
[1em]
Lottery sales per capita&               &               &               &      -0.370*  \\
                    &               &               &               &     (0.197)   \\
\hline
Observations        &          33   &          46   &          26   &          34   \\
R\textsuperscript{2}&       0.004   &       0.022   &       0.061   &       0.099   \\
\hline\hline
\end{tabular}
}

\caption{Entrepreneurship and Risk Aversion}
\label{tab:entrep_risk}
\vspace{-8mm}
\begin{singlespace} \begin{flushleft} \footnotesize This table shows that there is no statistically or economically significant relationship between country-level measures of entrepreneurship and risk taking.
\end{flushleft} \end{singlespace}
\end{table} 

\clearpage


\begin{table}[h!]
\centering 
{
\def\sym#1{\ifmmode^{#1}\else\(^{#1}\)\fi}
\begin{tabular}{l*{4}{c}}
\hline\hline
                    &\multicolumn{1}{c}{(1)}   &\multicolumn{1}{c}{(2)}   &\multicolumn{1}{c}{(3)}   &\multicolumn{1}{c}{(4)}   \\
\hline
Entrepreneurship share&       0.593***&       0.598***&       0.505** &       0.623***\\
                    &     (0.165)   &     (0.134)   &     (0.188)   &     (0.159)   \\
[1em]
Willingness to take risks&       0.289   &               &               &               \\
                    &     (0.173)   &               &               &               \\
[1em]
Uncertainty avoidance index&               &       0.078   &               &               \\
                    &               &     (0.162)   &               &               \\
[1em]
Lottery sales per capita&               &               &      -0.309   &               \\
                    &               &               &     (0.220)   &               \\
[1em]
Equity share of household assets&               &               &               &       0.171   \\
                    &               &               &               &     (0.187)   \\
\hline
Observations        &          32   &          44   &          34   &          26   \\
R\textsuperscript{2}&       0.361   &       0.344   &       0.293   &       0.454   \\
\hline\hline
\end{tabular}
}

\caption{Entrepreneurship, Inequality, and Risk Aversion}
\label{tab:ineq_entrep_control_risk}
\vspace{-8mm}
\begin{singlespace} \begin{flushleft} \footnotesize This table shows that the relationship between inequality and entrepreneurship holds even after controlling for measures of risk preferences.
\end{flushleft} \end{singlespace}
\end{table} 

\clearpage


\begin{table}[h!]
\centering 
\small
{
\def\sym#1{\ifmmode^{#1}\else\(^{#1}\)\fi}
\begin{tabular}{l*{5}{c}}
\hline\hline
                    &\multicolumn{1}{c}{(1)}   &\multicolumn{1}{c}{(2)}   &\multicolumn{1}{c}{(3)}   &\multicolumn{1}{c}{(4)}   &\multicolumn{1}{c}{(5)}   \\
\hline
Belief in own abilities&       0.561***&               &               &               &       0.538***\\
                    &     (0.120)   &               &               &               &     (0.198)   \\
[1em]
Others think you are innovative&               &       0.406***&               &               &      -0.033   \\
                    &               &     (0.132)   &               &               &     (0.221)   \\
[1em]
Think entrepreneur is a good career choice&               &               &       0.320** &               &       0.025   \\
                    &               &               &     (0.140)   &               &     (0.153)   \\
[1em]
Know someone who recently started a business&               &               &               &       0.373***&       0.085   \\
                    &               &               &               &     (0.134)   &     (0.157)   \\
\hline
Observations        &          50   &          50   &          49   &          50   &          49   \\
R\textsuperscript{2}&       0.314   &       0.165   &       0.100   &       0.139   &       0.327   \\
\hline\hline
\end{tabular}
}

\caption{Drivers of Entrepreneurship}
\label{tab:entrep_beliefs}
\vspace{-8mm}
\begin{singlespace} \begin{flushleft} \footnotesize This table shows that the relationship between entrepreneurship and self-reported measures of beliefs and perceptions related to entrepreneurship.
\end{flushleft} \end{singlespace}
\end{table} 



\begin{table}[h!]
\centering 
\small
{
\def\sym#1{\ifmmode^{#1}\else\(^{#1}\)\fi}
\begin{tabular}{l*{5}{c}}
\hline\hline
                    &\multicolumn{1}{c}{(1)}   &\multicolumn{1}{c}{(2)}   &\multicolumn{1}{c}{(3)}   &\multicolumn{1}{c}{(4)}   &\multicolumn{1}{c}{(5)}   \\
\hline
Fear of failure     &       0.025   &               &               &               &       0.005   \\
                    &     (0.144)   &               &               &               &     (0.147)   \\
[1em]
Motivation for great wealth&               &      -0.126   &               &               &      -0.252   \\
                    &               &     (0.143)   &               &               &     (0.163)   \\
[1em]
Assign high social status to entrepreneurs&               &               &       0.112   &               &       0.035   \\
                    &               &               &     (0.146)   &               &     (0.153)   \\
[1em]
Business failure rate&               &               &               &       0.185   &       0.280   \\
                    &               &               &               &     (0.142)   &     (0.168)   \\
\hline
Observations        &          50   &          50   &          49   &          50   &          49   \\
R\textsuperscript{2}&       0.001   &       0.016   &       0.012   &       0.034   &       0.088   \\
\hline\hline
\end{tabular}
}

\caption{Preferences and Entrepreneurship}
\label{tab:entrep_prefs}
\vspace{-8mm}
\begin{singlespace} \begin{flushleft} \footnotesize This table shows that entrepreneurship has no meaningful relationship with several alternative measures of risk preferences, including actual business failure rates.
\end{flushleft} \end{singlespace}
\end{table} 

\clearpage


\bibliography{sample.bib}

\appendix

\section{Equilibrium Characterization \texorpdfstring{(\Cref{sec:eq-characterization}) Proofs}{Equilibrium Characterization Proofs}}\label[appendix]{omittedproofs}

\subsection{Proof of \texorpdfstring{\Cref{lem:survival}}{Theorem~\ref{lem:survival}}}\label{app:firstlem}

\begin{proof}[Proof of \Cref{lem:survival}] Define \(G(x)\coloneqq\exp\left(\int_0^x\frac{1}{d(r)}dr\right)\) and \(g(x)\coloneqq G'(x)=G(x)/d(x)\). Recall that \(D_t\coloneqq M_t-X_t\), and consider \[N_t\coloneqq G(M_t)-g(M_t)D_t.\]
\begin{claim}
    \(N\) is a local martingale under \(\mathbb P^0\).
\end{claim}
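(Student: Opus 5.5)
We apply It\^o's formula to \(N_t=G(M_t)-g(M_t)D_t\), using that \(M\) is continuous, nondecreasing, and increases only on the set \(\{t\colon X_t=M_t\}=\{t\colon D_t=0\}\). Under \(\mathbb P^0\) we have \(X_t=\sigma W_t\) and \(D_t=M_t-X_t\), so \(dD_t=dM_t-\sigma\,dW_t\). Because \(M\) has finite variation, it has zero quadratic variation with itself and with \(W\). Hence \(g(M_t)D_t\) is a product of a finite-variation process and a semimartingale, and no second-order terms arise. This is where \(d\in C^1\) enters: \(g=G/d\) is then \(C^1\) with \(g'=G/d^2-Gd'/d^2\), though in fact we only need \(g\) to be \(C^1\).

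The expansion is
\[
dN_t=G'(M_t)\,dM_t-g'(M_t)D_t\,dM_t-g(M_t)\,dD_t
=g(M_t)\,dM_t-g'(M_t)D_t\,dM_t-g(M_t)\,dM_t+g(M_t)\sigma\,dW_t.
\]
The two \(g\,dM\) terms cancel because \(G'=g\). The term \(g'(M_t)D_t\,dM_t\) vanishes because \(dM_t\) is supported on \(\{D_t=0\}\). What remains is \(dN_t=\sigma g(M_t)\,dW_t\). Since \(g\) is continuous and \(M\) is continuous, the integrand is locally bounded, so \(N\) is a continuous local martingale. This proves the claim.

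The main step left is to turn this into the survival formula. Fix \(m>0\) and set \(\rho\coloneqq T_m\wedge\tau_d\). On \([0,\rho]\) we have \(M_t\le m\) and \(0\le D_t\le d(M_t)\le\sup_{[0,m]}d\), so \(N^\rho\) is bounded and hence a true martingale. Moreover \(\rho<\infty\) a.s., since \(\rho\le T_m<\infty\) for driftless Brownian motion. Optional stopping with \(N_0=G(0)=1\) gives \(1=\mathbb E^0[N_\rho]\). Now evaluate \(N_\rho\) on the two events.
\begin{itemize}
\item On \(\{T_m<\tau_d\}\): \(M_\rho=m\) and \(D_\rho=0\), so \(N_\rho=G(m)\).
\item On \(\{\tau_d\le T_m\}\): \(D_\rho=d(M_\rho)\), so \(N_\rho=G(M_\rho)-g(M_\rho)d(M_\rho)=0\).
\end{itemize}
Continuity of paths ensures the boundary is hit exactly, because \(d\) is continuous and bounded away from zero on \([0,m]\). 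The case where both times coincide has \(D=0<d\), which is impossible. Therefore \(\mathbb P^0(T_m<\tau_d)=1/G(m)=\exp(-A_d(m))\).

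For the remaining statements, note that \(\{\tau_d=\infty\}\subseteq\{T_m<\tau_d\}\) for every \(m\), since \(T_m<\infty\) a.s. If \(A_d(m)\to\infty\), letting \(m\to\infty\) gives \(\mathbb P^0(\tau_d=\infty)=0\). Finally, \(\{M_{\tau_d}\ge m\}=\{T_m\le\tau_d\}\). This agrees with \(\{T_m<\tau_d\}\) up to a null set: \(T_m=\tau_d\) would require \(D=0\ge d(m)>0\), which is impossible. So \(\mathbb P^0(M_{\tau_d}\ge m)=\exp(-A_d(m))\). This function is continuous in \(m\), so \(M_{\tau_d}\) has no atoms for \(m>0\), and its CDF is \(K_d=1-\exp(-A_d)\).

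The only delicate point is the localization and the identification of \(N_\rho\) on each event. Both are handled by the boundedness of \(N\) before \(\rho\) and by path continuity.
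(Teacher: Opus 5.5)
Your proof is correct and takes the same approach as the paper. You expand \(N_t=G(M_t)-g(M_t)D_t\) using that \(M\) has finite variation, cancel the \(g(M_t)\,dM_t\) terms via \(G'=g\), and drop \(g'(M_t)D_t\,dM_t\) because \(dM\) charges only \(\{D_t=0\}\), which leaves \(dN_t=\sigma g(M_t)\,dW_t\); your remarks on vanishing cross-variation and local boundedness of the integrand spell out what the paper leaves implicit, and your continuation to the survival formula follows the paper's later claims.
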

\begin{proof}
    The running maximum \(M_t\) can increase only when \(X_t=M_t\), i.e., only when \(D_t=0\). Equivalently, the Stieltjes measure \(dM_t\) assigns mass only to times at which \(D_t=0\), and so \(D_t dM_t=0\). Since \(dD_t=dM_t-dX_t\), we have
    \[dN_t=G'(M_t)dM_t-g(M_t)dD_t-D_tg'(M_t)dM_t=g(M_t)dX_t,\] so \(N\) is a local martingale under \(\mathbb P^0\).
\end{proof}
Next, we establish the first conclusion in the lemma.
\begin{claim}\label{cl:lem1no2}
    For every \(m\ge0\), \(\mathbb P^0\left(T_m<\tau_d\right)=\exp\left(-A_d(m)\right)\).
\end{claim}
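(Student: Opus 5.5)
We will apply optional stopping to the local martingale \(N_t=G(M_t)-g(M_t)D_t\) at the stopping time \(\rho\coloneqq T_m\wedge\tau_d\). The idea is that \(N\) takes a simple value at each of the two exits.

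\textbf{Values at the exits.} Start from \((M_0,D_0)=(0,0)\), so \(N_0=G(0)=1\). On \(\{T_m<\tau_d\}\) we have \(M_\rho=m\) and \(D_\rho=0\), hence \(N_\rho=G(m)\). On \(\{\tau_d\le T_m\}\), continuity of paths and of \(d\) give \(D_\rho=d(M_\rho)\), hence
\[
N_\rho=G(M_\rho)-\frac{G(M_\rho)}{d(M_\rho)}\,d(M_\rho)=0 .
\]
If the exits happen together, \(\tau_d=T_m\), then \(D=0<d(m)\), so this case cannot occur.

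\textbf{Boundedness.} For \(t\le\rho\) we have \(M_t\in[0,m]\) and \(0\le D_t\le d(M_t)\). Therefore
\[
|N_{t\wedge\rho}|\le G(m)+\sup_{[0,m]}g\cdot\sup_{[0,m]}d .
\]
This is finite because \(d\) is continuous on \([0,m]\) and bounded below there by a positive constant, so \(g\) is bounded on \([0,m]\) as well. The stopped process \(N^\rho\) is thus a bounded local martingale, hence a true martingale.

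\textbf{Finiteness of \(\rho\).} Up to \(\rho\) the process \(X\) stays in the interval \([-\sup_{[0,m]}d,\;m]\). Driftless Brownian motion exits any bounded interval almost surely, so \(\rho<\infty\) a.s. Bounded convergence then gives
\[
1=\mathbb E^0[N_\rho]=G(m)\,\mathbb P^0(T_m<\tau_d),
\]
which is \(\mathbb P^0(T_m<\tau_d)=\exp(-A_d(m))\).

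\textbf{Remaining claims.} Suppose \(A_d(m)\to\infty\). The events \(\{T_m<\tau_d\}\) decrease in \(m\), so
\[
\mathbb P^0(\tau_d=\infty)\le\lim_{m\to\infty}\mathbb P^0(T_m<\tau_d)=0 .
\]
The inequality holds because on \(\{\tau_d=\infty\}\) every level \(m\) is eventually reached, since \(\limsup X_t=\infty\) a.s. Hence \(\tau_d<\infty\) a.s.

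Next we identify the law of \(M_{\tau_d}\). At \(\tau_d\) we have \(D_{\tau_d}=d(M_{\tau_d})>0\), so \(M\) is not increasing at that moment. Consequently \(\{M_{\tau_d}\ge m\}\) equals \(\{T_m\le\tau_d\}\), and this coincides with \(\{T_m<\tau_d\}\) because \(T_m=\tau_d\) is impossible. This gives \(\mathbb P^0(M_{\tau_d}\ge m)=\exp(-A_d(m))\). The function \(A_d\) is continuous, so the law has no atoms, and the CDF of \(M_{\tau_d}\) is \(K_d(m)=1-\exp(-A_d(m))\).

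\textbf{Main obstacle.} The delicate step is the boundary bookkeeping rather than the martingale computation. One must confirm that \(D_\rho=d(M_\rho)\) exactly on the stopping event, which uses continuity of \(X\), \(M\) and \(d\). One must also rule out ties between \(T_m\) and \(\tau_d\), which relies on \(d>0\).
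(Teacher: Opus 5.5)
Your proof is correct and follows the paper's argument: optional stopping of the bounded stopped local martingale \(N^{T_m\wedge\tau_d}\), with \(N=G(m)\) on \(\{T_m<\tau_d\}\), \(N=0\) on \(\{\tau_d<T_m\}\), and ties ruled out because \(d>0\). The differences are cosmetic. The paper uses the sharper bound \(0\le N\le G(m)\), since \(N=G(M)\left(1-D/d(M)\right)\), and gets finiteness directly from \(T_m<\infty\) a.s.; your cruder absolute bound and your interval-exit argument for \(\rho<\infty\) work equally well.
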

\begin{proof}
Stop \(N\) at \(T_m\wedge\tau_d\). On this stopped interval, \(M_t\le m\) and \(0\le D_t\le d(M_t)\), so \(0\le N_{t\wedge T_m\wedge\tau_d}\le G(m)\). Accordingly, the stopped local martingale is bounded, hence, a true martingale; and optional stopping yields \(\mathbb E^0\left[N_{t\wedge T_m\wedge\tau_d}\right]=N_0\). Since \(T_m<\infty\) \(\mathbb P^0\)-almost surely, letting \(t\to\infty\), we obtain \(\mathbb E^0\left[N_{T_m\wedge\tau_d}\right]=N_0\). On \(\left\{T_m<\tau_d\right\}\), the continuous process \(X\) reaches the new record level \(m\) exactly at time \(T_m\). Hence, \(M_{T_m}=X_{T_m}=m\), so \(D_{T_m}=0\) and \(N_{T_m}=G(m)\). On \(\left\{\tau_d<T_m\right\}\), \(D_{\tau_d}=d(M_{\tau_d})\), so \(N_{\tau_d}=0\). Since \(d>0\), the two stopping times cannot coincide. Accordingly,
\[
1=N_0=\mathbb E^0\left[N_{T_m\wedge\tau_d}\right]=G(m)\mathbb P^0\left(T_m<\tau_d\right).
\]
Thus, \(\mathbb P^0(T_m<\tau_d)=G(m)^{-1}=\exp\left(-A_d(m)\right)\).
\end{proof}
We finish with the last two conclusions of the result.

\begin{claim}
     If \(A_d(m)\to\infty\), then \(\tau_d<\infty\) \(\mathbb P^0\)-almost surely.
\end{claim}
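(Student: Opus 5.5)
The plan is to read off the claim directly from \Cref{cl:lem1no2}. The events \(\{T_m<\tau_d\}\) are decreasing in \(m\), because \(T_m\) is nondecreasing in \(m\). Since \(A_d(m)\to\infty\), continuity from above gives
\[
\mathbb P^0\Bigl(\bigcap_{m\ge0}\{T_m<\tau_d\}\Bigr)=\lim_{m\to\infty}\exp\left(-A_d(m)\right)=0 .
\]
So almost surely there is some finite \(m\) with \(\tau_d\le T_m\). As noted in the proof of \Cref{cl:lem1no2}, the two times cannot coincide because \(d>0\), so in fact \(\tau_d<T_m\).

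It then suffices to know that \(T_m<\infty\) \(\mathbb P^0\)-almost surely for every fixed \(m\). This is standard: driftless Brownian motion \(\sigma W\) has \(\limsup_t X_t=+\infty\) almost surely, so every level is hit in finite time. It was already used in the proof of \Cref{cl:lem1no2}.

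To assemble the argument, take a countable sequence \(m=n\in\mathbb N\). Work on the almost-sure event \(\bigcap_n\{T_n<\infty\}\cap\{\exists n\colon \tau_d\le T_n\}\). On this event, \(\tau_d\le T_n<\infty\) for some \(n\). The only care needed is to use countably many levels, so that the intersection of null-complement events remains of full measure. No real obstacle is expected.

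Having established this, the remaining conclusions follow as well. On \(\{\tau_d<\infty\}\), the events \(\{M_{\tau_d}\ge m\}\) and \(\{T_m\le\tau_d\}\) coincide by continuity of \(X\). Moreover, \(\mathbb P^0(T_m=\tau_d)=0\). Hence \(\mathbb P^0(M_{\tau_d}\ge m)=\exp(-A_d(m))\), which identifies \(K_d\) as the CDF, given continuity of \(A_d\).
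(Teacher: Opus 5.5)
Your proposal is correct and follows the paper's argument. The paper likewise identifies \(\{\tau_d=\infty\}\), up to a null set, with the decreasing intersection \(\bigcap_n\{T_n<\tau_d\}\), using that \(T_n<\infty\) almost surely because \(M_t\to\infty\), and then applies continuity of measure with \Cref{cl:lem1no2}; you phrase the same step through the complementary event \(\{\exists n\colon \tau_d\le T_n<\infty\}\).
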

\begin{proof}Under \(\mathbb P^0\), \(M_t\to\infty\) almost surely. Therefore, \(\left\{\tau_d=\infty\right\}=\bigcap_{n\ge1}\left\{T_n<\tau_d\right\}\), and \Cref{cl:lem1no2}  delivers
\[
\mathbb P^0\left(\tau_d=\infty\right)=\lim_{n\to\infty}\mathbb P^0\left(T_n<\tau_d\right)=\lim_{n\to\infty}\exp\left(-A_d(n)\right)=0. \qedhere
\]
\end{proof}

\begin{claim}
    The induced distribution of \(M_{\tau_d}\) has CDF \(K_d(m)\coloneqq1-\exp\left(-A_d(m)\right)\).
\end{claim}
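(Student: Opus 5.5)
The plan is to identify the event \(\{M_{\tau_d}\ge m\}\) with \(\{T_m<\tau_d\}\), up to a \(\mathbb P^0\)-null set, and then read off the CDF from \Cref{cl:lem1no2}. We work under the standing hypothesis \(A_d(m)\to\infty\), so that the previous claim gives \(\tau_d<\infty\) \(\mathbb P^0\)-almost surely and \(M_{\tau_d}\) is a well-defined, finite random variable.

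\textbf{Step 1: the two events agree.}

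One inclusion is immediate. If \(T_m<\tau_d\), then \(M_{\tau_d}\ge M_{T_m}=m\).

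For the converse, suppose \(M_{\tau_d}\ge m\) but \(T_m\ge\tau_d\).
\begin{itemize}
\item Since \(T_m\) is the first time \(M\) reaches \(m\), we have \(M_t<m\) for all \(t<T_m\). Taking \(t<\tau_d\le T_m\) and using continuity of \(M\) gives \(M_{\tau_d}\le m\), so \(M_{\tau_d}=m\) and hence \(T_m\le\tau_d\).
\item Combining with \(T_m\ge\tau_d\) forces \(T_m=\tau_d\).
\item In the proof of \Cref{cl:lem1no2} we saw that the two stopping times cannot coincide: at \(T_m\) the drawdown is \(0\), while at \(\tau_d\) it equals \(d(m)>0\). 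This is a contradiction.
\end{itemize}
Hence \(\{M_{\tau_d}\ge m\}=\{T_m<\tau_d\}\) almost surely, and therefore
\[
\mathbb P^0\left(M_{\tau_d}\ge m\right)=\exp\left(-A_d(m)\right).
\]

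\textbf{Step 2: pass from \(\ge\) to the CDF.}

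We have \(\mathbb P^0(M_{\tau_d}\le m)=1-\mathbb P^0(M_{\tau_d}>m)\). The function \(m\mapsto\exp(-A_d(m))\) is continuous, because \(A_d\) is an integral of the locally bounded function \(1/d\). Consequently
\[
\mathbb P^0\left(M_{\tau_d}>m\right)=\lim_{m'\downarrow m}\exp\left(-A_d(m')\right)=\exp\left(-A_d(m)\right).
\]
This also shows that \(M_{\tau_d}\) has no atoms. It follows that \(\mathbb P^0(M_{\tau_d}\le m)=K_d(m)\).

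\textbf{Step 3: \(K_d\) is a genuine CDF on \([0,\infty)\).}

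We check the required properties:
\begin{itemize}
\item \(K_d(0)=0\), since \(A_d(0)=0\);
\item \(K_d\) is continuous and nondecreasing, since \(A_d\) is;
\item \(K_d(m)\to1\) as \(m\to\infty\), since \(A_d(m)\to\infty\).
\end{itemize}

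The only delicate point is the tie case \(T_m=\tau_d\) in Step 1. It is handled by the positivity of \(d\), exactly as in \Cref{cl:lem1no2}.
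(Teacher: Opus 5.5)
Your proof is correct and follows the paper's argument: identify \(\{M_{\tau_d}\ge m\}\) with \(\{T_m<\tau_d\}\) using \(\tau_d<\infty\) almost surely and the fact that \(T_m\) and \(\tau_d\) cannot coincide (drawdown \(0\) versus \(d(m)>0\)), then use continuity of \(A_d\) to rule out atoms and pass to the CDF. The differences are cosmetic. The first bullet of Step 1 is a detour, since \(T_m\le\tau_d\) follows directly from the definition of \(T_m\) as an infimum once \(M_{\tau_d}\ge m\). Your Step 3 check that \(K_d\) is a proper CDF is harmless; the paper leaves it implicit.
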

\begin{proof}
    Since \(\tau_d<\infty\) almost surely, \(\left\{T_m<\tau_d\right\}=\left\{M_{\tau_d}\ge m\right\}\). Indeed, \(T_m<\tau_d\) implies \(M_{\tau_d}\ge m\), while \(M_{\tau_d}\ge m\) implies \(T_m\le\tau_d\), and equality is impossible by the no-coincidence argument above. Therefore, \(\mathbb P^0\left(M_{\tau_d}\ge m\right)=\exp\left(-A_d(m)\right)\). Since \(A_d\) is continuous, this survival function is continuous, so \(M_{\tau_d}\) has no atoms. Accordingly, \(\mathbb P^0\left(M_{\tau_d}\le m\right)=1-\exp\left(-A_d(m)\right)\).
\end{proof}
    This proves the lemma.\end{proof}

\subsection{Two Auxiliary Results}\label{app:aux}

\Cref{lem:survival} maps a drawdown boundary into the true-law distribution of stopped maxima. We now go the other direction. First, in \Cref{prop:fixed-K-verification}, for a fixed conjectured rank schedule \(K\), we characterize a boundary that is optimal for the agent under the perceived law. Then, in \Cref{lem:free-boundary-necessity} we show conversely that any (regular) optimal boundary indeed satisfies this characterization.


Recall that the effective prize is \(\phi_K(m)\coloneqq m+y\left(K(m)\right)\). The verification theorem is stated for a generic prize function \(\phi\). Write \(D_t\coloneqq M_t-X_t\). By Skorokhod reflection \citet[Ch.~3, Sec.~6.C]{KaratzasShreve1991}, \[\label{eq:refldyn}\tag{\(1\)}D_t=D_0-\mu t-\sigma W_t^\mu+L_t \qquad \text{and} \qquad M_t=M_0+L_t,\] where \(L_t\) increases only when \(D_t=0\).
\begin{proposition}\label{prop:fixed-K-verification}
Fix \(\mu\in(0,c)\), and let \(\phi\in C^1\left([0,\infty)\right)\) be increasing and bounded below, with \(\sup_m\phi^\prime(m)<c/\mu\). Suppose an agent's drawdown boundary \(d\) is \(C^1\), positive, bounded, bounded away from zero, and satisfies \(d(m)\to d_\infty<\infty\). If \(d\) solves
\[\label{eq:boundaryODE}\tag{\(2\)}
d^\prime(m)=1-\frac{\mu}{c}\frac{\phi^\prime(m)}{1-\exp\left(-\frac{2\mu}{\sigma^2}d(m)\right)},
\]
then \(\tau_d\) is optimal from every \((m,z)\) with \(0\le z<d(m)\), among stopping times \(\tau\) satisfying \(\mathbb E^\mu_{m,z}\left[\tau\right]<\infty\) and \(\mathbb E^\mu_{m,z}\left[\phi\left(M_\tau\right)^+\right]<\infty\). Moreover, the associated fixed-\(\phi\) value function is regular for \(d\) with terminal payoff \(\phi\) at perceived drift \(\mu\). If, in addition, \(\phi^\prime(m)\to1\), then \(d_\infty=\bar d_\mu\).
\end{proposition}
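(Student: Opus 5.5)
Our approach is a standard verification argument. First we build a candidate value function in closed form on the continuation region, then we check that it satisfies the variational inequalities, and finally we apply Itô's formula to the reflected pair \((M,D)\) from \eqref{eq:refldyn}.

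\textbf{Step 1: the candidate.} For fixed \(m\), solve the ODE \(\frac{\sigma^2}{2}V_{zz}-\mu V_z-c=0\) on \(z\in(0,d(m))\), subject to value matching \(V(m,d(m))=\phi(m)\) and smooth fit \(V_z(m,d(m))=0\). Writing \(\kappa\coloneqq 2\mu/\sigma^2\), the solution is
\[
V(m,z)=\phi(m)-\frac{c}{\mu}\bigl(d(m)-z\bigr)+\frac{c}{\mu\kappa}\Bigl(e^{\kappa(z-d(m))}-1\Bigr)\cdot(-1)\cdot(-1),
\]
where the signs are to be fixed so that \(V_z=-\frac{c}{\mu}\bigl(1-e^{\kappa(z-d(m))}\bigr)\). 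This form gives \(V_z(m,d(m))=0\), and since \(V_z<0\) inside, \(V\) is decreasing in \(z\). We then compute \(V_m(m,0)\) by differentiating through \(d(m)\), and impose the reflection condition \(V_m(m,0)+V_z(m,0)=0\). After simplification this condition reads \(\phi'(m)=\frac{c}{\mu}\bigl(1-d'(m)\bigr)\bigl(1-e^{-\kappa d(m)}\bigr)\), which is exactly \eqref{eq:boundaryODE}. Regularity follows because \(d\in C^1\).

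\textbf{Step 2: variational inequalities.} Set \(V=\phi(m)\) for \(z\ge d(m)\). Two inequalities must be checked.

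\emph{Continuation dominates stopping.} We need \(V\ge\phi\) on \(C_d\). This holds because \(V_z<0\) on \((0,d(m))\) and \(V=\phi\) at \(z=d(m)\).

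\emph{No gain from continuing in the stopping region.} For \(z>d(m)\) the generator applied to \(\phi(m)\) equals \(-c<0\), so the inequality holds there. By smooth fit, \(V\) is \(C^1\) across the boundary in \(z\), so the Itô–Tanaka (or local time) formula applies without extra boundary terms.

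\emph{Reflection inequality out of the continuation region.} This is needed because a deviating agent can accumulate new maxima while sitting at \(z=0\), and the relevant \(d(m)\) there is the boundary value. At \(z=0\) we always have \(z<d(m)\), so the exact identity \(V_m+V_z=0\) from Step 1 applies. Hence \(dM\) contributes zero.

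\textbf{Step 3: verification.} For any admissible \(\tau\), apply Itô to \(V(M_t,D_t)-ct\) under \(\mathbb P^\mu\). The drift is \(\le0\), the \(dL\) term vanishes, and the martingale part is \(-\sigma\int V_z\,dW^\mu\). Since \(|V_z|\le c/\mu\), this martingale is a true martingale on \([0,\tau]\) whenever \(\mathbb E\tau<\infty\). This yields \(\mathbb E[\phi(M_\tau)-c\tau]\le \mathbb E[V(M_\tau,D_\tau)-c\tau]\le V(m,z)\). Passing to the limit uses \(\phi\) bounded below, together with \(\mathbb E\phi(M_\tau)^+<\infty\) and monotone/dominated convergence.

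For \(\tau_d\), equality holds throughout. What remains is to show that \(\tau_d\) is itself admissible, and we expect this to be the main obstacle. The plan is as follows.

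\begin{enumerate}
\item Use the martingale \(D_t+\mu t-L_t\) together with \(d\le\bar d\) to get \(\mathbb E\tau_d\le(\bar d+\mathbb E L_{\tau_d})/\mu\).
\item Control \(\mathbb E L_{\tau_d}=\mathbb E M_{\tau_d}-m\) via an analogue of \Cref{lem:survival} under drift \(\mu\). Using the scale function, the per-unit exit rate at record \(r\) is \(\kappa/(e^{\kappa d(r)}-1)\). This rate is bounded below because \(d\) is bounded, so \(M_{\tau_d}-m\) has exponential tails and finite mean.
\item Obtain \(\mathbb E\phi(M_{\tau_d})^+<\infty\) from the same tail bound, since \(\sup\phi'<c/\mu\) makes \(\phi\) grow at most linearly.
\end{enumerate}

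We will also need the hypothesis \(\sup\phi'<c/\mu\) to ensure \(d'<1\)-type consistency, namely that \(V\) is finite and that the candidate is well defined.

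\textbf{Step 4: the limit.} If \(\phi'\to1\) and \(d\to d_\infty\), then \(d'\) has the limit \(1-\frac{\mu}{c}/(1-e^{-\kappa d_\infty})\). Since \(d\) is bounded, this limit must be \(0\). Solving gives \(1-e^{-\kappa d_\infty}=\mu/c\), that is, \(d_\infty=\bar d_\mu\).
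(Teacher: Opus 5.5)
Your proposal is correct and follows essentially the paper's verification argument: the same explicit candidate, the reflection condition $V_m+V_z=0$ at $z=0$ as a restatement of \eqref{eq:boundaryODE}, a local-time It\^o supermartingale bound, admissibility of $\tau_d$ via an exponential tail for $M_{\tau_d}$, and the same limit argument for $d_\infty$; the only differences are minor technical substitutions (boundedness of $V_z$ replaces the paper's localization-and-Fatou step, and the martingale $D_t+\mu t-L_t$ with a variable-boundary exit rate replaces the comparison with the constant drawdown $\sup_m d(m)$ and Lehoczky's formula). One slip to fix: the displayed candidate has the wrong sign on the linear term, and the function actually determined by your stated $V_z$ and value matching is $V(m,z)=\phi(m)+\frac{c}{\mu}\bigl(d(m)-z\bigr)-\frac{c}{\mu\kappa}\bigl(1-e^{\kappa(z-d(m))}\bigr)$.
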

\begin{proof}[Proof of \Cref{prop:fixed-K-verification}]
Let \(a\coloneqq2\mu/\sigma^2\). For \(0\le z\le d\left(m\right)\), define
\[
V\left(m,z\right)\coloneqq\phi\left(m\right)+\frac{c}{\mu}\left[d\left(m\right)-z-\frac{1-\exp\left(-a\left(d\left(m\right)-z\right)\right)}{a}\right],
\]
and for \(z\ge d\left(m\right)\), set \(V\left(m,z\right)\coloneqq\phi\left(m\right)\). Again, we proceed by going through a sequence of claims. Throughout this proof, admissible means satisfying the two integrability conditions in the statement.
\begin{claim}
    The function \(V\) is \(C^{1,2}\left(C_d\right)\), extends continuously with the relevant one-sided derivatives to the boundaries, and satisfies \(V\ge\phi\), the HJB equation in the continuation region, value matching, smooth fit, and the reflected-boundary condition.
\end{claim}
\begin{proof}
Inside \(C_d\), we differentiate to obtain
\[
V_m\left(m,z\right)=\phi^\prime\left(m\right)+\frac{c}{\mu}d^\prime\left(m\right)\left[1-\exp\left(-a\left(d\left(m\right)-z\right)\right)\right],
\]
\[
V_z\left(m,z\right)=\frac{c}{\mu}\left[\exp\left(-a\left(d\left(m\right)-z\right)\right)-1\right],
\quad \text{and} \quad
V_{zz}\left(m,z\right)=\frac{2c}{\sigma^2}\exp\left(-a\left(d\left(m\right)-z\right)\right).
\]
Since \(\phi,d\in C^1\), these derivatives are continuous on \(C_d\), and the relevant one-sided boundary derivatives exist continuously. Since \(x-\left(1-\exp\left(-ax\right)\right)/a\ge0\) for \(x\ge0\), \(V\ge\phi\). At \(z=d\left(m\right)\), \(V\left(m,d\left(m\right)\right)=\phi\left(m\right)\) and \(V_z\left(m,d\left(m\right)^-\right)=0\). Using the expressions for \(V_z\) and \(V_{zz}\), we also have \(\left(\sigma^2/2\right)V_{zz}-\mu V_z-c=0\) on \(C_d\). At the reflecting boundary \(z=0\),
\[
V_m\left(m,0^+\right)+V_z\left(m,0^+\right)=\phi^\prime\left(m\right)+\frac{c}{\mu}\left[1-\exp\left(-ad\left(m\right)\right)\right]\left[d^\prime\left(m\right)-1\right].
\]
Thus, the boundary ODE \eqref{eq:boundaryODE} is exactly the condition \(V_m\left(m,0^+\right)+V_z\left(m,0^+\right)=0\).
\end{proof}

\begin{claim}
    The process \(V\left(M_t,D_t\right)-ct\) is a local supermartingale under \(\mathbb P^\mu\).
\end{claim}
\begin{proof}
Using the reflected dynamics stated in \eqref{eq:refldyn}, the local-time term at \(z=0\) is proportional to \(V_m+V_z\). The generalized Itô formula with local time on surfaces \cite[Theorem~2.1]{Peskir2007Surfaces} applies to the pasted value function. The HJB kills the drift in the continuation region, smooth fit kills the boundary local time at \(z=d\left(m\right)\), and the condition \(V_m\left(m,0\right)+V_z\left(m,0\right)=0\) kills the reflection term at \(z=0\). In the stopping region \(z>d\left(m\right)\), the pasted value is \(V\left(m,z\right)=\phi\left(m\right)\), so \(V_z=V_{zz}=0\), \(M_t\) is locally constant away from \(z=0\), and the drift of \(V\left(M_t,D_t\right)-ct\) is \(-c\). Hence, \(V\left(M_t,D_t\right)-ct\) is a local supermartingale.
\end{proof}

\begin{claim}\label{cl:prop323}
    For every admissible stopping time \(\tau\),
    \[
    \mathbb E^\mu_{m,z}\left[V\left(M_\tau,D_\tau\right)-c\tau\right]\le V\left(m,z\right).
    \]
\end{claim}
\begin{proof}
Fix an arbitrary starting state \(\left(m,z\right)\) with \(0\le z<d\left(m\right)\). Let \[\eta_N\coloneqq\inf\left\{t\ge0\colon M_t+D_t\ge N\right\} \qquad \text{and} \qquad \tau_N\coloneqq\tau\wedge N\wedge\eta_N.\] For \(N>m+z\), the stopping time \(\tau_N\) is bounded by \(N\), and before \(\eta_N\) we have \(M_t+D_t\le N\). Thus, \(V\left(M_{t\wedge\tau_N},D_{t\wedge\tau_N}\right)-c\left(t\wedge\tau_N\right)\) is bounded. Consequently, from optional sampling
\[
\mathbb E^\mu_{m,z}\left[V\left(M_{\tau_N},D_{\tau_N}\right)-c\tau_N\right]\le V\left(m,z\right).
\]
Because \(d\) is bounded, \(V\le\phi+c\sup_m d\left(m\right)/\mu\), while \(V\ge\phi\) and \(\phi\) is bounded below. Because \(\tau_N\uparrow\tau\) and the sample paths are continuous, \(M_{\tau_N}\to M_\tau\) and \(D_{\tau_N}\to D_\tau\). 

Let \(C\coloneqq c\sup_m d\left(m\right)/\mu\). Since \(V\le\phi+C\), \(M_{\tau_N}\le M_\tau\), and \(\phi\) is increasing, the positive parts are bounded by the integrable random variable \(\phi\left(M_\tau\right)^++C\). Since \(V\ge\phi\) and \(\phi\) is bounded below, \(V\left(M_{\tau_N},D_{\tau_N}\right)-c\tau_N\ge \inf_m\phi\left(m\right)-c\tau\), whose right-hand side is integrable. Fatou's lemma, therefore, yields
\[
\mathbb E^\mu_{m,z}\left[V\left(M_\tau,D_\tau\right)-c\tau\right]
\le
\liminf_{N\to\infty}
\mathbb E^\mu_{m,z}\left[V\left(M_{\tau_N},D_{\tau_N}\right)-c\tau_N\right]
\le V\left(m,z\right).\qedhere
\]
\end{proof}

Since \(V\ge\phi\), \Cref{cl:prop323} implies
\[
\mathbb E^\mu_{m,z}\left[\phi\left(M_\tau\right)-c\tau\right]\le V\left(m,z\right)
\]
for every admissible stopping time \(\tau\). It remains to show that \(\tau_d\) is admissible and attains this upper bound.

\begin{claim}
    The stopping time \(\tau_d\) is admissible.
\end{claim}
\begin{proof}
Let \(D\coloneqq\sup_m d\left(m\right)\), and define \(\tau_D\coloneqq\inf\left\{t\ge0\colon D_t\ge D\right\}\). Since \(d\le D\), we have \(\tau_d\le\tau_D\). By Lehoczky's fixed-drawdown formula \cite[Eq.~(5)]{Lehoczky1977}, when the process starts with no drawdown, \(D_0=0\),
\[
\mathbb P^\mu\left(M_{\tau_D}-M_0\ge r\right)=\exp\left(-\rho_Dr\right),
\qquad \text{where} \quad
\rho_D\coloneqq\frac{2\mu}{\sigma^2\left[\exp\left(2\mu D/\sigma^2\right)-1\right]}.
\]
Starting from any drawdown \(z\in\left[0,D\right]\) can only reduce the chance of reaching new maxima before hitting \(D\), so the same exponential tail bound applies.

Moreover, the killed reflected drawdown process on \(\left[0,D\right]\) has finite mean. Indeed, if \(h\left(z\right)\coloneqq\mathbb E^\mu_z\left[\tau_D\right]\), then
\[
\frac{\sigma^2}{2}h^{\prime\prime}\left(z\right)-\mu h^\prime\left(z\right)=-1,\qquad \text{with} \qquad h^\prime\left(0\right)=0,\quad \text{and} \quad h\left(D\right)=0.
\]
Solving this,
\[
\mathbb E^\mu_z\left[\tau_D\right]=\frac{\sigma^2}{2\mu^2}\left[\exp\left(\frac{2\mu D}{\sigma^2}\right)-\exp\left(\frac{2\mu z}{\sigma^2}\right)\right]-\frac{D-z}{\mu}<\infty.
\]
Thus, \(\mathbb E^\mu_{m,z}\left[\tau_D\right]<\infty\), and since \(\tau_d\le\tau_D\), \(\mathbb E^\mu_{m,z}\left[\tau_d\right]<\infty\). Since \(M_{\tau_d}\le M_{\tau_D}\), and since \(\sup_m\phi^\prime\left(m\right)<\infty\) makes \(\phi\) Lipschitz, \(\mathbb E^\mu_{m,z}\left[\phi\left(M_{\tau_d}\right)^+\right]<\infty\). Hence, \(\tau_d\) is admissible.\end{proof}

\begin{claim}
    The stopping time \(\tau_d\) attains the upper bound: 
    \[
    V\left(m,z\right)=\mathbb E^\mu_{m,z}\left[\phi\left(M_{\tau_d}\right)-c\tau_d\right].
    \]
\end{claim}
\begin{proof}
Before \(\tau_d\), the process remains in the continuation region, and at \(\tau_d\), by value matching, \(V\left(M_{\tau_d},D_{\tau_d}\right)=\phi\left(M_{\tau_d}\right)\). The same localization argument as above, now with equality throughout, yields
\[
V\left(m,z\right)=\mathbb E^\mu_{m,z}\left[V\left(M_{\tau_d},D_{\tau_d}\right)-c\tau_d\right]
=
\mathbb E^\mu_{m,z}\left[\phi\left(M_{\tau_d}\right)-c\tau_d\right]. \qedhere
\]
\end{proof}
The preceding claims show that \(\tau_d\) is optimal and that the candidate \(V\) coincides with the fixed-\(\phi\) value function. Since the candidate is \(C^{1,2}\left(C_d\right)\) and satisfies the HJB equation, value matching, smooth fit, and the reflected-boundary condition, this value function is regular for \(d\) with terminal payoff \(\phi\) at perceived drift \(\mu\).

Finally, suppose \(\phi^\prime\left(m\right)\to1\). Since \(d\left(m\right)\to d_\infty\), the right-hand side of the boundary ODE \eqref{eq:boundaryODE} has a finite limit. Since \(d\) itself has a finite limit, this limiting derivative must be zero. Hence,
\[
0=1-\frac{\mu}{c}\frac{1}{1-\exp\left(-\frac{2\mu}{\sigma^2}d_\infty\right)},
\]
which delivers \(d_\infty=\bar d_\mu\).
\end{proof}
As we noted \Cref{prop:fixed-K-verification} shows that any boundary satisfying \eqref{eq:boundaryODE} is optimal. Our next lemma provides the converse within the regular drawdown class: any regular optimal drawdown boundary must satisfy \eqref{eq:boundaryODE}. This is the step that will allow the scalar ODE to characterize, rather than merely construct, regular drawdown equilibria.

Given a prize function \(\phi\), we write
\[
V^\mu_\phi\left(m,z\right)\coloneqq \sup_\tau \mathbb E^\mu_{m,z}\left[\phi\left(M_\tau\right)-c\tau\right],
\]
where the supremum is over stopping times \(\tau\) satisfying \(\mathbb E^\mu_{m,z}\left[\tau\right]<\infty\) and \(\mathbb E^\mu_{m,z}\left[\phi\left(M_\tau\right)^+\right]<\infty\).
\begin{lemma}\label{lem:free-boundary-necessity}
Fix \(\mu\in\left(0,c\right)\). Let \(\phi\in C^1\left(\left[0,\infty\right)\right)\), and let \(d\in C^1\left(\left[0,\infty\right)\right)\) be positive. Suppose \(\tau_d\) is an optimal drawdown rule for the fixed-\(\phi\) problem and that \(V^\mu_\phi\) is regular for \(d\) with terminal payoff \(\phi\) at perceived drift \(\mu\). Then \(d\) satisfies
\[\label{eq:ode2}\tag{\(3\)}
d'(m)=1-\frac{\mu}{c}\frac{\phi'(m)}{1-\exp\left(-\frac{2\mu}{\sigma^2}d(m)\right)}.
\]
If, in addition, \(\phi'(m)\to1\) and \(d(m)\to d_\infty\in(0,\infty)\), then \(d_\infty=\bar d_\mu\).
\end{lemma}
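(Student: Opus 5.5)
The plan is to solve the HJB equation explicitly in \(z\) for each fixed \(m\), and then read \eqref{eq:ode2} off the reflected-boundary condition, reversing the computation in the proof of \Cref{prop:fixed-K-verification}. Regularity of \(V\coloneqq V^\mu_\phi\) says that for each fixed \(m\) the map \(z\mapsto V(m,z)\) is \(C^2\) on \((0,d(m))\) and solves the linear constant-coefficient equation
\[
\frac{\sigma^2}{2}V_{zz}-\mu V_z-c=0 .
\]
Set \(a\coloneqq 2\mu/\sigma^2\). The general solution is
\[
V(m,z)=\alpha(m)+\beta(m)e^{az}-\frac{c}{\mu}z .
\]
I will impose the two terminal conditions at \(z=d(m)\). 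Smooth fit gives \(V_z(m,d(m)^-)=0\), which forces \(a\beta(m)e^{ad(m)}=c/\mu\). Value matching then determines \(\alpha(m)\). Together they give
\[
V(m,z)=\phi(m)+\frac{c}{\mu}\left[d(m)-z-\frac{1-\exp\left(-a\left(d(m)-z\right)\right)}{a}\right]\quad\text{on } C_d ,
\]
which is exactly the candidate used in \Cref{prop:fixed-K-verification}. So regularity pins \(V\) down uniquely in closed form.

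Next I differentiate this closed form in \(m\). This is legitimate because \(\phi,d\in C^1\), and the formula is continuous up to \(z=0^+\). The result is
\[
V_m(m,0^+)=\phi'(m)+\frac{c}{\mu}d'(m)\left[1-e^{-ad(m)}\right],
\qquad
V_z(m,0^+)=\frac{c}{\mu}\left[e^{-ad(m)}-1\right].
\]
The reflected-boundary condition \(V_m+V_z=0\) at \(z=0^+\), which is part of regularity, then reads
\[
\phi'(m)+\frac{c}{\mu}\left[1-e^{-ad(m)}\right]\left(d'(m)-1\right)=0 .
\]
Since \(d(m)>0\), the bracket is positive, so we can divide by it; this yields \eqref{eq:ode2}.

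The main subtlety is a point of justification rather than computation. We must check that the one-sided derivative \(V_m(m,0^+)\) named in the regularity definition coincides with the derivative of the closed-form expression. This requires the closed form to hold on all of \(C_d\), uniformly in \(m\), so that partial derivatives in \(m\) commute with taking the limit \(z\to0^+\). I would handle it by noting two facts: the ODE solution is determined pointwise in \(m\) by data (\(\phi(m)\) and \(d(m)\)) that are \(C^1\) in \(m\), and the formula is jointly \(C^1\) on the closure of \(C_d\). The continuous extension assumed in regularity therefore agrees with it. Optimality of \(\tau_d\) is used only to identify \(V^\mu_\phi\) as the regular function on \(C_d\); the argument does not otherwise need it.

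For the limit statement, assume \(\phi'\to1\) and \(d\to d_\infty\in(0,\infty)\). The right side of \eqref{eq:ode2} then converges to
\[
L\coloneqq 1-\frac{\mu}{c}\,\frac{1}{1-e^{-ad_\infty}} .
\]
If \(L\neq0\), the mean value theorem gives \(d(m+1)-d(m)\to L\), contradicting convergence of \(d\). Hence \(L=0\). Solving \(1-e^{-ad_\infty}=\mu/c\) gives \(d_\infty=\sigma^2\log\left(c/(c-\mu)\right)/(2\mu)=\bar d_\mu\), which is well defined since \(\mu<c\).
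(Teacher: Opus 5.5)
Your proposal is correct and follows the paper's proof essentially step for step. Both solve the HJB in \(z\) for fixed \(m\), use smooth fit and value matching to pin down the closed form of \(V\), impose the reflected-boundary condition at \(z=0^+\) to read off the ODE, and argue that a nonzero limiting derivative would prevent \(d\) from converging. Your mean-value-theorem step and your remark on identifying \(V_m(m,0^+)\) with the derivative of the closed form make explicit two points the paper states only briefly.
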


\begin{proof}[Proof of \Cref{lem:free-boundary-necessity}]

Let \(V\coloneqq V^\mu_\phi\) and maintain the shorthand \(a\coloneqq 2\mu/\sigma^2\).
\begin{claim}
    For each \(m\ge0\) and \(0\le z\le d\left(m\right)\),
    \[V\left(m,z\right)=\phi\left(m\right)+\frac{c}{\mu}\left[d\left(m\right)-z-\frac{1-\exp\left(-a\left(d\left(m\right)-z\right)\right)}{a}\right].
    \]
\end{claim}
\begin{proof}
    Fix \(m\ge0\). Since \(V\) is regular, it satisfies
    \[
    \frac{\sigma^2}{2}V_{zz}\left(m,z\right)-\mu V_z\left(m,z\right)-c=0
    \]
    for \(0<z<d\left(m\right)\). This is a linear ODE in \(z\). Setting \(U\left(z\right)\coloneqq V_z\left(m,z\right)\), we get \(U^\prime\left(z\right)-aU\left(z\right)=2c/\sigma^2\), so \(V_z\left(m,z\right)=\Gamma\left(m\right)e^{az}-c/\mu\) for some function \(\Gamma\). Moreover, smooth fit, \(V_z\left(m,d\left(m\right)^-\right)=0\), implies \(\Gamma\left(m\right)=\left(c/\mu\right)e^{-ad\left(m\right)}\). Hence,
    \[
    V_z\left(m,z\right)=\frac{c}{\mu}\left[\exp\left(-a\left(d\left(m\right)-z\right)\right)-1\right].
    \]
    Integrating from \(z\) to \(d\left(m\right)\) and using value matching, \(V\left(m,d\left(m\right)\right)=\phi\left(m\right)\), produces the stated formula for \(V\left(m,z\right)\).
\end{proof}

\begin{claim}
    The boundary \(d\) satisfies the free-boundary ODE \eqref{eq:ode2}.
\end{claim}
\begin{proof}
    Differentiating the formula for \(V\) in the previous claim with respect to \(m\), and then taking \(z\downarrow0\), delivers
    \[
    V_m\left(m,0^+\right)=\phi^\prime\left(m\right)+\frac{c}{\mu}d^\prime\left(m\right)\left[1-\exp\left(-ad\left(m\right)\right)\right].
    \]
    From the formula for \(V_z\),
    \[
    V_z\left(m,0^+\right)=-\frac{c}{\mu}\left[1-\exp\left(-ad\left(m\right)\right)\right].
    \]
    Imposing the reflected-boundary condition \(V_m\left(m,0^+\right)+V_z\left(m,0^+\right)=0\), we get
    \[
    0=\phi^\prime\left(m\right)+\frac{c}{\mu}\left[1-\exp\left(-ad\left(m\right)\right)\right]\left[d^\prime\left(m\right)-1\right],
    \]
    which we solve for \(d^\prime\left(m\right)\) to obtain the desired ODE \eqref{eq:ode2}.
\end{proof}

\begin{claim}
    If \(\phi^\prime\left(m\right)\to1\) and \(d\left(m\right)\to d_\infty\in\left(0,\infty\right)\), then \(d_\infty=\bar d_\mu\).
\end{claim}
\begin{proof}
    Under these assumptions, the right-hand side of \eqref{eq:ode2} has a finite limit. Since \(d\) has a finite limit, this limiting derivative must be zero; otherwise \(d\) could not converge. Thus,
    \[
    1-\exp\left(-\frac{2\mu}{\sigma^2}d_\infty\right)=\frac{\mu}{c},
    \]
    and, therefore, \(d_\infty=\bar d_\mu\).
\end{proof}

The three claims prove the lemma.
\end{proof}

\subsection{Proof of \texorpdfstring{\Cref{thm:equilibrium-characterization}}{Theorem~\ref{thm:equilibrium-characterization}}}\label{ap:maintheoremproof}

\begin{proof}[Proof of \Cref{thm:equilibrium-characterization}]
The equation in the theorem is the drawdown equation written along population quantiles. To simplify notation, define its right-hand side by
\[
F_\mu(t,x)\coloneqq x-\frac{\mu}{c}\frac{x+b(t)}{1-\exp\left(-\frac{2\mu}{\sigma^2}x\right)},
\]
so that the equation is \(\delta'(t)=F_\mu(t,\delta(t))\). The number \(\bar d_\mu\) is the limiting drawdown level, characterized by
\[
1-\exp\left(-\frac{2\mu}{\sigma^2}\bar d_\mu\right)=\frac{\mu}{c}.
\]
Set \(B\coloneqq y'(0)\), \(\alpha_\mu\coloneqq1+B/\bar d_\mu\), and
\[
d_\mu^+\coloneqq\frac{\sigma^2}{2\mu}\log\left(\frac{c}{c-\mu\alpha_\mu}\right).
\]
Also write \(\theta_\mu\coloneqq\mu/c\), \(a_\mu\coloneqq2\mu/\sigma^2\), and \(g_\mu(x)\coloneqq1/\left(1-\exp\left(-a_\mu x\right)\right)\).

We proceed through six claims. The first four claims solve the scalar terminal-value problem. The last two claims reconstruct the equilibrium and prove uniqueness among regular drawdown equilibria.

\begin{claim}
For every \(T>0\), the finite-terminal problem
\[\label{eq:finite-terminal-delta}\tag{\(4\)}
\delta_T'(t)=F_\mu(t,\delta_T(t)),
\quad \text{with} \quad
\delta_T(T)=\bar d_\mu
\]
has a unique solution on \([0,T]\), and this solution satisfies
\[
\bar d_\mu\le\delta_T(t)\le d_\mu^+
\qquad\text{for all }t\in[0,T].
\]
\end{claim}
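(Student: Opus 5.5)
The plan is to treat \eqref{eq:finite-terminal-delta} as an ordinary initial-value problem in reversed time, \(s\coloneqq T-t\). I will get local existence and uniqueness from Picard--Lindel\"of, and then show that the strip \([\bar d_\mu,d_\mu^+]\) is forward-invariant in \(s\). That invariance gives the bounds and also rules out blow-up or escape toward \(x=0\), so the local solution extends to all of \([0,T]\).

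\textbf{Regularity.} On \([0,T]\times[\bar d_\mu/2,\infty)\), the map \(x\mapsto g_\mu(x)=1/(1-e^{-a_\mu x})\) is smooth and bounded. The function \(b(t)=e^{-t}y'(1-e^{-t})\) is continuous, since \(y\in C^2\). Hence \(F_\mu\) is continuous in \(t\) and Lipschitz in \(x\) on every compact subset of this region. Picard--Lindel\"of then gives a unique local solution through the terminal point \((T,\bar d_\mu)\). It also gives uniqueness of any solution that stays in the strip, by a Gronwall argument on the compact set \([0,T]\times[\bar d_\mu,d_\mu^+]\).

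\textbf{Boundary signs.} The invariance argument rests on two sign computations.
\begin{itemize}[nosep]
\item At the lower edge, \(\theta_\mu g_\mu(\bar d_\mu)=1\) by the definition of \(\bar d_\mu\). Hence \(F_\mu(t,\bar d_\mu)=\bar d_\mu-(\bar d_\mu+b(t))=-b(t)<0\), because \(y'>0\). In reversed time, \(\tfrac{d}{ds}\delta=-F_\mu\) is then strictly positive on the lower edge.
\item At the upper edge, the standing assumption \(\mu\in\mathcal M\) says exactly that \(\mu\alpha_\mu<c\), so \(d_\mu^+\) is well defined. A direct computation gives \(\theta_\mu g_\mu(d_\mu^+)=\theta_\mu c/(\mu\alpha_\mu)=1/\alpha_\mu\). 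Therefore
\[
F_\mu(t,d_\mu^+)=d_\mu^+-\frac{d_\mu^++b(t)}{\alpha_\mu}=\frac{1}{\alpha_\mu}\Bigl(\frac{B\,d_\mu^+}{\bar d_\mu}-b(t)\Bigr).
\]
This is nonnegative because \(d_\mu^+\ge\bar d_\mu\) (as \(\alpha_\mu\ge1\)) and because \(b(t)\le y'(1-e^{-t})\le y'(0)=B\) (using \(y''\le0\)). So in reversed time the velocity \(-F_\mu\) is nonpositive on the upper edge.
\end{itemize}
The terminal value \(\bar d_\mu\) lies in the strip, so the solution starts there.

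\textbf{Invariance, the delicate step.} The upper inequality can be weak, for example at \(t=0\) when \(d_\mu^+=\bar d_\mu\) is not strictly separated. A naive first-exit contradiction therefore does not close, and I will use a Gronwall comparison instead. Let \(L\) be a Lipschitz constant for \(F_\mu(t,\cdot)\) on \([\bar d_\mu,d_\mu^+ +1]\), and set \(w(s)\coloneqq(\delta(T-s)-d_\mu^+)^+\). While \(\delta\le d_\mu^++1\), we have
\[
-F_\mu(t,\delta)\le -F_\mu(t,d_\mu^+)+L(\delta-d_\mu^+)\le Lw
\]
whenever \(\delta>d_\mu^+\). This gives \(w'\le Lw\) almost everywhere with \(w(0)=0\), so \(w\equiv0\). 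The lower edge is handled the same way with \((\bar d_\mu-\delta)^+\), where the inequality is in fact strict.

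\textbf{Conclusion.} The solution stays in a compact subset of the domain of \(F_\mu\). By the standard continuation theorem it therefore extends backward to \(t=0\), and it satisfies \(\bar d_\mu\le\delta_T\le d_\mu^+\) on \([0,T]\). Uniqueness follows from the local Lipschitz property. I expect the only real care to be needed in this invariance step with the weak inequality; everything else is routine.
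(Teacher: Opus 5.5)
Your proposal is correct and follows the paper's own route: reverse time, check the signs of \(F_\mu\) at \(\bar d_\mu\) and \(d_\mu^+\), conclude that the strip is invariant, and extend the local Picard--Lindel\"of solution to all of \([0,T]\). Your Gronwall treatment of invariance is more careful than the paper's ``cannot cross'' remark, but it is not needed. Because \(y'(0)>0\), we have \(\alpha_\mu>1\) and hence \(d_\mu^+>\bar d_\mu\), which gives \(F_\mu(t,d_\mu^+)>(B-b(t))/\alpha_\mu\ge0\); so both edge inequalities are strict, the degenerate case \(d_\mu^+=\bar d_\mu\) you flag never arises, and a plain first-exit argument already closes.
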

\begin{proof}
Since \(\mu\in\mathcal M\), \(\theta_\mu\alpha_\mu<1\), so \(d_\mu^+\) is well defined and satisfies \(1-\exp\left(-a_\mu d_\mu^+\right)=\theta_\mu\alpha_\mu\). At the lower endpoint,
\[
F_\mu(t,\bar d_\mu)=-b(t)\le0.
\]
At the upper endpoint,
\[
F_\mu(t,d_\mu^+)=d_\mu^+-\frac{d_\mu^++b(t)}{\alpha_\mu}
=\frac{d_\mu^+\left(\alpha_\mu-1\right)-b(t)}{\alpha_\mu}\ge0,
\]
because \(d_\mu^+\ge\bar d_\mu\), \(\alpha_\mu-1=B/\bar d_\mu\), and \(b(t)\le B\).

Since the limiting condition is imposed at infinity, we approximate it with finite terminal problems. Fix \(T>0\), and let us look for a function \(\delta_T\) on \([0,T]\) satisfying \eqref{eq:finite-terminal-delta}. Equivalently, define \(h_T(r)\coloneqq\delta_T(T-r)\). Then \(h_T(0)=\bar d_\mu\), and \(h_T\) solves
\[
h_T'(r)=-F_\mu(T-r,h_T(r)).
\]

The interval \([\bar d_\mu,d_\mu^+]\) cannot be left by this reversed-time equation. At the lower endpoint, \(h_T=\bar d_\mu\), we have
\[
h_T'(r)=-F_\mu(T-r,\bar d_\mu)=b(T-r)\ge0,
\]
so the solution cannot cross below \(\bar d_\mu\). At the upper endpoint, \(h_T=d_\mu^+\), we have
\[
h_T'(r)=-F_\mu(T-r,d_\mu^+)\le0,
\]
so the solution cannot cross above \(d_\mu^+\). Thus, any solution that starts in \([\bar d_\mu,d_\mu^+]\) remains in this interval.

The right-hand side of the reversed equation is continuous in \(r\) and locally Lipschitz in \(h_T\) whenever \(h_T>0\). As a result, the initial-value problem for \(h_T\) has a unique local solution. The bounds above keep this solution in the compact interval \([\bar d_\mu,d_\mu^+]\), where the right-hand side is bounded. Hence, the solution cannot blow up before time \(T\), so \(h_T\) is defined on all of \([0,T]\). Equivalently, \(\delta_T(t)\coloneqq h_T(T-t)\) is uniquely defined on \([0,T]\), solves
\[
\delta_T'(t)=F_\mu(t,\delta_T(t)),
\qquad \text{and} \qquad
\delta_T(T)=\bar d_\mu,
\]
and satisfies
\[
\bar d_\mu\le\delta_T(t)\le d_\mu^+
\qquad\text{for all }t\in[0,T].\qedhere
\]
\end{proof}

\begin{claim}
There exists a \(C^1\) solution \(\delta_\mu\) of \(\delta_\mu'=F_\mu(t,\delta_\mu)\) on \([0,\infty)\), and it satisfies
\[
\bar d_\mu\le\delta_\mu(t)\le d_\mu^+
\qquad\text{for all }t\ge0.
\]
\end{claim}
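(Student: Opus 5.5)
We construct \(\delta_\mu\) as a limit of the finite-terminal solutions \(\delta_T\) from the previous claim as \(T\to\infty\). The first step is a monotonicity property in \(T\). For \(T_1<T_2\), compare \(\delta_{T_1}\) and \(\delta_{T_2}\) on \([0,T_1]\). At \(t=T_1\) we have \(\delta_{T_1}(T_1)=\bar d_\mu\le\delta_{T_2}(T_1)\), by the lower bound in the previous claim.

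Both functions solve the same ODE \(\delta'=F_\mu(t,\delta)\). The map \(x\mapsto F_\mu(t,x)\) is Lipschitz on the compact interval \([\bar d_\mu,d_\mu^+]\subset(0,\infty)\), with a constant uniform in \(t\) since \(b\le B\). Solutions of such an ODE cannot cross. So if \(\delta_{T_2}(T_1)>\bar d_\mu\), we get \(\delta_{T_2}>\delta_{T_1}\) on \([0,T_1]\); otherwise they coincide by uniqueness. Either way, \(\delta_{T_2}\ge\delta_{T_1}\) on \([0,T_1]\).

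Thus for each fixed \(t\), the family \(T\mapsto\delta_T(t)\) (for \(T\ge t\)) is nondecreasing and bounded above by \(d_\mu^+\). We can therefore define the pointwise limit
\[
\delta_\mu(t)\coloneqq\lim_{T\to\infty}\delta_T(t)\in[\bar d_\mu,d_\mu^+].
\]

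Next we upgrade pointwise convergence to a genuine solution. On each compact interval \([0,S]\), the derivatives \(\delta_T'=F_\mu(t,\delta_T)\) are uniformly bounded, because \(F_\mu\) is continuous on the compact set \([0,S]\times[\bar d_\mu,d_\mu^+]\). So the family \(\{\delta_T\}_{T\ge S}\) is equicontinuous on \([0,S]\). Combined with monotone pointwise convergence, Dini's theorem (or Arzelà–Ascoli) gives uniform convergence on \([0,S]\).

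We then pass to the limit in the integral form
\[
\delta_T(t)=\delta_T(0)+\int_0^t F_\mu(s,\delta_T(s))\,ds.
\]
Dominated convergence applies, using continuity of \(F_\mu\) in \(x\) and the uniform bound on \([\bar d_\mu,d_\mu^+]\). This yields
\[
\delta_\mu(t)=\delta_\mu(0)+\int_0^t F_\mu(s,\delta_\mu(s))\,ds.
\]
Since \(b\) is continuous (as \(y'\) is continuous), the integrand is continuous. Hence \(\delta_\mu\) is \(C^1\) and solves \(\delta_\mu'=F_\mu(t,\delta_\mu)\) on every \([0,S]\), and therefore on all of \([0,\infty)\). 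The bounds \(\bar d_\mu\le\delta_\mu\le d_\mu^+\) pass directly to the limit.

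The main obstacle is the comparison step. It requires that \(F_\mu(t,\cdot)\) be locally Lipschitz, which holds away from \(x=0\) because \(g_\mu\) is smooth on \((0,\infty)\); the invariant interval keeps us away from \(0\). The comparison is the standard non-crossing argument: at any putative crossing point the two solutions would have equal values, so by uniqueness of the initial-value problem they would coincide everywhere. Everything else is routine compactness.

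The limit property \(\delta_\mu(t)\to\bar d_\mu\) is deliberately not addressed here; it belongs to the subsequent claims.
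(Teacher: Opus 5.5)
Your proof is correct, but it takes a different route to the limit than the paper does.

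The paper relies only on compactness. The \(\delta_T\) are uniformly bounded and equi-Lipschitz on compacts, so it extracts, via Arzel\`a--Ascoli and a diagonal argument over \([0,k]\), a subsequence \(\delta_{T_{N_j}}\) converging locally uniformly. It then passes to the limit in the integral equation.

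You instead exploit the order structure of the ODE. Since \(\delta_{T_2}(T_1)\ge\bar d_\mu=\delta_{T_1}(T_1)\) and \(F_\mu(t,\cdot)\) is Lipschitz on \([\bar d_\mu,d_\mu^+]\) uniformly in \(t\), non-crossing gives \(\delta_{T_1}\le\delta_{T_2}\) on \([0,T_1]\). Hence the whole family increases to \(\delta_\mu=\sup_T\delta_T\).

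Your route gives convergence of the full family with no subsequence or diagonal extraction. It also gives an explicit description of \(\delta_\mu\) as the monotone envelope, for instance \(\delta_T\le\delta_\mu\) for every \(T\). The paper's argument yields only a subsequential limit; full-family convergence would follow afterwards from its uniqueness claim.

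The paper's route needs no comparison principle, only continuity and boundedness of \(F_\mu\) on the box. It would therefore survive with \(F_\mu\) merely continuous in \(x\), though here the Lipschitz property is available anyway, since the paper already uses it for uniqueness of \(\delta_T\).

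A minor remark: your uniform-convergence step is a convenience rather than a necessity. Monotone pointwise convergence plus the uniform bound on \(F_\mu\) over the box already let dominated convergence deliver the limiting integral identity. Continuity, and then \(C^1\) regularity, of \(\delta_\mu\) follow from that identity.
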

\begin{proof}
The family of solutions to \eqref{eq:finite-terminal-delta}, \(\{\delta_T\}_{T>0}\), is uniformly bounded. It is also equi-Lipschitz on compact time intervals, because \(F_\mu\) is bounded on \([0,\infty)\times[\bar d_\mu,d_\mu^+]\). Choose a sequence \(T_n\to\infty\), and pass to a subsequence, still denoted \(T_n\), such that \(T_n\ge n\). Let \(n_j^0\coloneqq j\).

For each integer \(k\ge1\), discard finitely many terms so that all remaining functions are defined on \([0,k]\), apply the Arzelà-Ascoli theorem on \([0,k]\) to the sequence \(\left(\delta_{T_{n_j^{k-1}}}\right)_j\), and choose a subsequence \(\left(n_j^k\right)_j\) such that \(\delta_{T_{n_j^k}}\) converges uniformly on \([0,k]\). Now set \(N_j\coloneqq n_j^j\). For every finite \(R\), the sequence \(\delta_{T_{N_j}}\) converges uniformly on \([0,R]\): if \(k_R\coloneqq\lceil R\rceil\), then for all \(j\ge k_R\), \(N_j\) belongs to the subsequence chosen for \([0,k_R]\). The limits on overlapping intervals agree, and, therefore, define a continuous function \(\delta_\mu\) on \([0,\infty)\).

For fixed \(0\le s\le t\), and all sufficiently large \(j\), the solutions to \eqref{eq:finite-terminal-delta} satisfy
\[
\delta_{T_{N_j}}(t)=\delta_{T_{N_j}}(s)+\int_s^tF_\mu(u,\delta_{T_{N_j}}(u))du.
\]
Because \(\delta_{T_{N_j}}\to\delta_\mu\) uniformly on \([s,t]\), and because \(F_\mu\) is continuous on the compact set \([s,t]\times[\bar d_\mu,d_\mu^+]\), the integrands converge uniformly. Passing to the limit along the selected subsequence produces
\[
\delta_\mu(t)=\delta_\mu(s)+\int_s^tF_\mu(u,\delta_\mu(u))du.
\]
Thus, \(\delta_\mu\) is \(C^1\) and solves \(\delta_\mu'=F_\mu(t,\delta_\mu)\) on \([0,\infty)\). The bounds also pass to the limit, so
\(\bar d_\mu\le\delta_\mu(t)\le d_\mu^+\) for all \(t \geq 0\).\end{proof}

\begin{claim}
The solution \(\delta_\mu\) satisfies \(\delta_\mu(t)\to\bar d_\mu\).
\end{claim}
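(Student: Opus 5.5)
We need to show that the limit solution \(\delta_\mu\) constructed in the previous claim converges to \(\bar d_\mu\). The plan is to exploit the structure of \(F_\mu\) near the terminal level: \(F_\mu(t,\bar d_\mu)=-b(t)\to0\), and \(x\mapsto F_\mu(t,x)\) is strictly increasing near \(\bar d_\mu\), so \(\bar d_\mu\) is a repelling equilibrium in forward time. A bounded solution must therefore track it. Concretely, write \(F_\mu(t,x)=G(x)-\theta_\mu b(t)g_\mu(x)\) with \(G(x)\coloneqq x-\theta_\mu x g_\mu(x)\). We have \(G(\bar d_\mu)=0\), since \(g_\mu(\bar d_\mu)=1/\theta_\mu\). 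We also have \(G'(x)=1-\theta_\mu\frac{d}{dx}\left[x/(1-e^{-a_\mu x})\right]\), and the function \(x/(1-e^{-a_\mu x})\) is convex and increasing.

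\textbf{Step one: \(G>0\) on \((\bar d_\mu,d_\mu^+]\).} The function \(x\mapsto\theta_\mu x g_\mu(x)/x=\theta_\mu g_\mu(x)\) is strictly decreasing, and it equals \(1\) at \(\bar d_\mu\). Hence \(G(x)=x\left(1-\theta_\mu g_\mu(x)\right)>0\) for \(x>\bar d_\mu\). This is immediate and avoids any derivative computation. Consequently, on the compact set \([\bar d_\mu+\varepsilon,d_\mu^+]\), we have \(G\ge\kappa_\varepsilon>0\).

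\textbf{Step two: the contradiction argument.} Fix \(\varepsilon>0\). Since \(b(t)\le e^{-t}\sup y'\to0\) and \(g_\mu\) is bounded on \([\bar d_\mu,d_\mu^+]\), there is a \(t_\varepsilon\) such that for \(t\ge t_\varepsilon\) and \(x\in[\bar d_\mu+\varepsilon,d_\mu^+]\),
\[
F_\mu(t,x)\ge\kappa_\varepsilon/2.
\]
Suppose \(\delta_\mu(s)\ge\bar d_\mu+\varepsilon\) for some \(s\ge t_\varepsilon\). Then \(\delta_\mu'\ge\kappa_\varepsilon/2\) as long as \(\delta_\mu\) stays in \([\bar d_\mu+\varepsilon,d_\mu^+]\). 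But \(\delta_\mu\) is increasing there, so it cannot drop back below \(\bar d_\mu+\varepsilon\). It must therefore increase at a linear rate forever, which contradicts the bound \(\delta_\mu\le d_\mu^+\) from the previous claim. Hence \(\delta_\mu(t)<\bar d_\mu+\varepsilon\) for all \(t\ge t_\varepsilon\). Combined with the lower bound \(\delta_\mu\ge\bar d_\mu\), this gives \(\limsup|\delta_\mu(t)-\bar d_\mu|\le\varepsilon\). Since \(\varepsilon\) was arbitrary, \(\delta_\mu(t)\to\bar d_\mu\).

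The main obstacle is only the uniform positivity of the drift away from \(\bar d_\mu\), which rests on the monotonicity of \(g_\mu\) and the vanishing of \(b\). Both are elementary, so I expect no real difficulty. The only care needed is to use the a priori bounds \(\bar d_\mu\le\delta_\mu\le d_\mu^+\) established in the preceding claim, rather than any property of the approximating terminal problems.
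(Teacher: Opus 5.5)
Your proof is correct, but it takes a different route from the paper's.

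The paper argues on the finite-horizon approximants. It uses compactness of \([\bar d_\mu,d_\mu^+]\times[0,B]\) to get a uniform bound \(\partial_dF_\mu\ge\lambda_\mu>0\). It then applies the mean-value theorem around \(\bar d_\mu\) to write \(x_T=\delta_T-\bar d_\mu\) as the solution of \(x_T'=a_T(t)x_T-b(t)\) with \(x_T(T)=0\). Solving backward gives \(0\le\delta_T(t)-\bar d_\mu\le Be^{-t}/(1+\lambda_\mu)\), and this bound passes to \(\delta_\mu\) along the selected subsequence.

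You never return to the approximants. You use only the box bounds on \(\delta_\mu\) itself, together with the sign of \(F_\mu\) above \(\bar d_\mu\) for large \(t\). As a result, the level \(\bar d_\mu+\varepsilon\) cannot be crossed downward after \(t_\varepsilon\), and any excursion above it would force linear growth, contradicting \(\delta_\mu\le d_\mu^+\).

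Your argument is more elementary. It needs no derivative estimate on \(F_\mu\), only \(x\bigl(1-\theta_\mu g_\mu(x)\bigr)>0\) for \(x>\bar d_\mu\) and \(b(t)\to0\). It also applies to any bounded solution staying above \(\bar d_\mu\), not just the constructed one.

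What the paper's route buys is an explicit exponential rate of convergence. In addition, the positivity of \(\partial_dF_\mu\) that it establishes is the same ingredient that drives the uniqueness step in the next claim.
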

\begin{proof}
Since \(y^{\prime\prime}\le0\), \(y^\prime(q)\le y^\prime(0)=B\) for \(q\in[0,1]\), and, hence, \(0\le b(t)\le Be^{-t}\le B\). For \(d\in[\bar d_\mu,d_\mu^+]\),
\[
\partial_dF_\mu(t,d)=1-\theta_\mu g_\mu(d)-\theta_\mu\left(d+b(t)\right)g_\mu'(d).
\]
Since \(g_\mu'(d)<0\), \(\theta_\mu g_\mu(d)\le1\) for \(d\ge\bar d_\mu\), and \(b(t)\in[0,B]\), continuity on the compact set \([\bar d_\mu,d_\mu^+]\times[0,B]\) guarantees \(\lambda_\mu>0\) such that \(\partial_dF_\mu(t,d)\ge\lambda_\mu\) for all \(t\ge0\) and all \(d\in[\bar d_\mu,d_\mu^+]\). Let \(x_T(t)\coloneqq\delta_T(t)-\bar d_\mu\). Since \(F_\mu(t,\bar d_\mu)=-b(t)\), we appeal to the mean-value theorem to deduce \(x_T'(t)=a_T(t)x_T(t)-b(t)\) for some \(a_T(t)\ge\lambda_\mu\), with \(x_T(T)=0\). Solving backward,
\[
0\le x_T(t)\le\int_t^T\exp\left(-\int_t^sa_T(v)dv\right)b(s)ds\le\frac{B}{1+\lambda_\mu}e^{-t}.
\]
Passing to the limit along the selected subsequence yields \(0\le\delta_\mu(t)-\bar d_\mu\le Be^{-t}/(1+\lambda_\mu)\), so \(\delta_\mu(t)\to\bar d_\mu\).
\end{proof}

\begin{claim}
The function \(\delta_\mu\) is the unique bounded positive solution satisfying \(\delta(t)\to\bar d_\mu\).
\end{claim}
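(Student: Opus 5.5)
The plan is to compare an arbitrary bounded positive solution \(\delta\) with \(\delta_\mu\) and show that their difference must vanish identically. The tool is the strict expansivity of \(F_\mu\) in its second argument. Let \(\delta\) be any bounded positive \(C^1\) solution of \(\delta'=F_\mu(t,\delta)\) with \(\delta(t)\to\bar d_\mu\).

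\textbf{Step 1: trap \(\delta\) in \([\bar d_\mu,d_\mu^+]\).} The endpoint computations from the first claim give \(F_\mu(t,\bar d_\mu)=-b(t)\le0\) and \(F_\mu(t,d_\mu^+)\ge0\). I would strengthen these to strict invariance in forward time, and \(b(t)>0\) for every finite \(t\) because \(y'>0\).

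\emph{Lower bound.} Suppose \(\delta(t_0)<\bar d_\mu\) at some \(t_0\). On the region \(x<\bar d_\mu\) we have \(\theta_\mu g_\mu(x)>1\), so
\[
F_\mu(t,x)=x\left(1-\theta_\mu g_\mu(x)\right)-\theta_\mu b(t)g_\mu(x)<0 .
\]
Hence \(\delta\) is strictly decreasing as long as it stays below \(\bar d_\mu\). It therefore stays below \(\delta(t_0)<\bar d_\mu\) forever, which contradicts \(\delta(t)\to\bar d_\mu\).

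\emph{Upper bound.} Suppose \(\delta(t_0)>d_\mu^+\). For \(x\ge d_\mu^+\), \(F_\mu(t,x)\ge F_\mu(t,d_\mu^+)\ge0\), because \(F_\mu\) is increasing there; this monotonicity is checked with the same derivative computation as in Step 2. So \(\delta\) stays at or above \(\delta(t_0)>d_\mu^+\ge\bar d_\mu\), again contradicting the limit.

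\textbf{Step 2: uniform expansivity on the strip.} On \([\bar d_\mu,d_\mu^+]\), the third claim already gives
\[
\partial_xF_\mu(t,x)\ge\lambda_\mu>0 \quad\text{for all } t\ge0 .
\]
Set \(w\coloneqq\delta-\delta_\mu\). Both functions take values in the strip, so the mean value theorem gives
\[
w'(t)=a(t)\,w(t)\quad\text{with } a(t)\ge\lambda_\mu .
\]
Therefore
\[
|w(t)|=|w(0)|\exp\left(\int_0^t a\right)\ge|w(0)|e^{\lambda_\mu t}.
\]
Both \(\delta\) and \(\delta_\mu\) tend to \(\bar d_\mu\), so \(w\to0\), which forces \(w(0)=0\) and hence \(w\equiv0\). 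Boundedness alone would also suffice, since a nonzero \(w\) grows exponentially.

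\textbf{Main obstacle.} The difficulty is the a priori confinement in Step 1. The lower bound \(\partial_xF_\mu\ge\lambda_\mu\) holds only on the strip, not on all of \((0,\infty)\). So I must rule out solutions that wander outside the strip and return, and that is exactly the role of the sign arguments above.

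For the upper side, I first try the derivative monotonicity of \(F_\mu\) for \(x\ge\bar d_\mu\). It holds because \(\theta_\mu g_\mu(x)\le1\) and \(g_\mu'<0\), so all three terms of \(\partial_xF_\mu\) are favorable. This gives \(F_\mu(t,x)\ge F_\mu(t,d_\mu^+)\ge0\) above \(d_\mu^+\).

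For the lower side, the strict sign \(F_\mu<0\) below \(\bar d_\mu\) is straightforward. Positivity of \(\delta\) keeps \(g_\mu\) finite, so the ODE is well defined throughout.
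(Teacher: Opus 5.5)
Your proposal is correct and follows essentially the paper's own proof. You first confine any terminal solution to \([\bar d_\mu,d_\mu^+]\) using the sign of \(F_\mu\) below \(\bar d_\mu\) and its monotonicity above \(d_\mu^+\), and then write the difference as \(w'=a(t)w\) with \(a>0\) and conclude from \(w\to0\). The only difference is cosmetic: you integrate forward from \(0\) using the uniform bound \(\lambda_\mu\), whereas the paper integrates backward from \(T\) and needs only \(a>0\).
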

\begin{proof}
First, any bounded positive solution satisfying \(\delta(t)\to\bar d_\mu\) must lie in \([\bar d_\mu,d_\mu^+]\). If \(d<\bar d_\mu\), then \(1-\exp\left(-a_\mu d\right)<\theta_\mu\), so \(F_\mu(t,d)=d\left(1-\theta_\mu g_\mu(d)\right)-\theta_\mu b(t)g_\mu(d)<0\). Thus, a solution below \(\bar d_\mu\) moves downward and cannot approach \(\bar d_\mu\) from below. The same derivative calculation shows \(F_\mu\) is strictly increasing on \([\bar d_\mu,\infty)\). Hence, if \(d>d_\mu^+\), then \(F_\mu(t,d)>F_\mu(t,d_\mu^+)\ge0\), so a solution above \(d_\mu^+\) cannot move down toward \(\bar d_\mu\). Therefore, every bounded terminal solution lies in the box.

Now let \(\delta\) and \(\tilde\delta\) be two bounded terminal solutions. Their difference \(w=\delta-\tilde\delta\) satisfies \(w'(t)=a(t)w(t)\), where \(a(t)>0\). Hence, for \(T>t\),
\[
w(t)=w(T)\exp\left(-\int_t^Ta(s)ds\right).
\]
Since both solutions converge to \(\bar d_\mu\), \(w(T)\to0\), and so \(w(t)=0\) for every \(t\).
\end{proof}

\begin{claim}
The scalar solution \(\delta_\mu\) reconstructs a regular drawdown equilibrium \(\left(K_\mu,d_\mu\right)\).
\end{claim}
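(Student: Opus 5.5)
We construct the equilibrium from \(\delta_\mu\) and then apply \Cref{prop:fixed-K-verification} and \Cref{lem:survival}.

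\textbf{Construction.} Set \(m_\mu(t)\coloneqq\int_0^t\delta_\mu(s)\,ds\). Because \(\delta_\mu\ge\bar d_\mu>0\) and \(\delta_\mu\) is bounded, \(m_\mu\) is a \(C^2\) strictly increasing bijection of \([0,\infty)\) onto itself. Let \(t_\mu\coloneqq m_\mu^{-1}\) and define
\[
d_\mu(m)\coloneqq\delta_\mu(t_\mu(m)),\qquad K_\mu(m)\coloneqq1-e^{-t_\mu(m)}.
\]
Then \(d_\mu\) is \(C^1\), positive, lies in \([\bar d_\mu,d_\mu^+]\), and tends to \(\bar d_\mu\) by the previous claims. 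Moreover
\[
A_{d_\mu}(m)=\int_0^m\frac{dr}{\delta_\mu(t_\mu(r))}=t_\mu(m)
\]
by the substitution \(r=m_\mu(s)\). Hence \(A_{d_\mu}\to\infty\), and \Cref{lem:survival} gives \(P^0(M_{\tau_{d_\mu}}\le m)=1-e^{-t_\mu(m)}=K_\mu(m)\). This is the consistency requirement.

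\textbf{Optimality.} Let \(\phi\coloneqq\phi_{K_\mu}\). Since \(K_\mu'(m)=e^{-t_\mu(m)}/\delta_\mu(t_\mu(m))\) is continuous, \(\phi\) is \(C^1\), increasing, and bounded below (by \(y(0)\)). Its derivative is
\[
\phi'(m)=1+\frac{b(t_\mu(m))}{d_\mu(m)}.
\]
Using \(b\le B\) and \(d_\mu\ge\bar d_\mu\), we get \(\phi'\le\alpha_\mu\). The condition \(\mu\in\mathcal M\) says exactly \(\alpha_\mu<c/\mu\), so \(\sup\phi'<c/\mu\). As \(t\to\infty\), \(b(t)\le Be^{-t}\to0\), so \(\phi'\to1\).

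Next we check the boundary ODE \eqref{eq:boundaryODE}. By the chain rule,
\[
d_\mu'(m)=\frac{\delta_\mu'(t)}{\delta_\mu(t)},\qquad t=t_\mu(m).
\]
Dividing the \(\delta_\mu\) equation by \(\delta_\mu(t)\) gives
\[
d_\mu'(m)=1-\frac{\mu}{c}\,\frac{1+b(t)/\delta_\mu(t)}{1-\exp\left(-\frac{2\mu}{\sigma^2}\delta_\mu(t)\right)},
\]
which is \eqref{eq:boundaryODE} with the \(\phi'\) just computed.

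Since every hypothesis of \Cref{prop:fixed-K-verification} holds, \(\tau_{d_\mu}\) is optimal from every \((m,z)\) with \(0\le z<d_\mu(m)\) among \(K_\mu\)-admissible times. That proposition also shows the value function is regular for \(d_\mu\) with terminal payoff \(\phi_{K_\mu}\). Its admissibility notion matches the paper's \(K\)-admissibility. Therefore \((K_\mu,d_\mu)\) is a regular drawdown equilibrium.

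\textbf{Main obstacle.} The delicate step is the regularity of \(\phi_{K_\mu}\) and the bound \(\sup\phi'<c/\mu\). This bound rests on the a priori lower bound \(\delta_\mu\ge\bar d_\mu\) from the first claims. Without it, \(b/\delta\) could exceed \(B/\bar d_\mu\), and the finite-stopping condition would no longer deliver the verification hypothesis. A minor second check is the bijectivity of \(m_\mu\), which follows from \(\delta_\mu\) being bounded away from zero.
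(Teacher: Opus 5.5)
Your proposal is correct and follows the paper's proof essentially step for step. You use the same construction of \(m_\mu\), its inverse, \(d_\mu\) and \(K_\mu\), and get consistency from \Cref{lem:survival} via \(A_{d_\mu}=t_\mu\); you then obtain the bound \(\tfrac{\mu}{c}\phi_{K_\mu}'\le\tfrac{\mu}{c}\alpha_\mu<1\) from \(\delta_\mu\ge\bar d_\mu\), use the chain-rule identity \(d_\mu'(m_\mu(t))\,\delta_\mu(t)=\delta_\mu'(t)\) to turn the scalar ODE into \eqref{eq:boundaryODE}, and finish with \Cref{prop:fixed-K-verification}, exactly as the paper does.
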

\begin{proof}
Define \(m_\mu(t)\coloneqq\int_0^t\delta_\mu(s)ds\). Since \(\delta_\mu\ge\bar d_\mu>0\), \(m_\mu(t)\to\infty\). Let \(T_\mu\) be the inverse of \(m_\mu\). Define \(K_\mu(m)\coloneqq1-e^{-T_\mu(m)}\) and \(d_\mu(m)\coloneqq\delta_\mu\left(T_\mu(m)\right)\). Then \(\int_0^mdu/d_\mu(u)=T_\mu(m)\), so \Cref{lem:survival} implies that \(K_\mu\) is exactly the true-law distribution induced by \(d_\mu\).

Along the quantile path \(m_\mu(t)\), \(K_\mu(m_\mu(t))=1-e^{-t}\), \(K_\mu'(m_\mu(t))=e^{-t}/\delta_\mu(t)\), and \(\phi_{K_\mu}'(m_\mu(t))=1+b(t)/\delta_\mu(t)\). The reconstructed boundary \(d_\mu\) is \(C^1\), positive, bounded, bounded away from zero, and satisfies \(d_\mu(m)\to\bar d_\mu\). The reconstructed \(K_\mu\) is \(C^1\), so \(\phi_{K_\mu}\) is \(C^1\), increasing, and bounded below. Because \(\delta_\mu(t)\ge\bar d_\mu\),
\[
\frac{\mu}{c}\phi_{K_\mu}'(m_\mu(t))\le\frac{\mu}{c}\left(1+\frac{y'(0)}{\bar d_\mu}\right)<1.
\]
The scalar ODE is exactly the fixed-\(K_\mu\) boundary ODE rewritten along \(m_\mu(t)\), since \(d_\mu'(m_\mu(t))\delta_\mu(t)=\delta_\mu'(t)\). Hence, \(d_\mu\) solves \eqref{eq:boundaryODE} with \(\phi=\phi_{K_\mu}\). Thus, \Cref{prop:fixed-K-verification} applies: \(d_\mu\) is optimal given \(K_\mu\), and the associated fixed-\(\phi_{K_\mu}\) value function is regular for \(d_\mu\) with terminal payoff \(\phi_{K_\mu}\) at perceived drift \(\mu\). Since \(K_\mu\) is generated by \(d_\mu\), the pair \(\left(K_\mu,d_\mu\right)\) is a regular drawdown equilibrium.

Finally, \(K_\mu(m_\mu(t))=1-e^{-t}\), so \(m_\mu(t)=K_\mu^{-1}\left(1-e^{-t}\right)\), and by construction \(m_\mu'(t)=\delta_\mu(t)\).
\end{proof}

\begin{claim}
Every regular drawdown equilibrium at perceived drift \(\mu\) is equal to \(\left(K_\mu,d_\mu\right)\).
\end{claim}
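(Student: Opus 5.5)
Take an arbitrary regular drawdown equilibrium \((K,d)\) at perceived drift \(\mu\). The plan is to show that its quantile-indexed drawdown is a bounded positive solution of the scalar equation that tends to \(\bar d_\mu\). The uniqueness claim just established then forces it to equal \(\delta_\mu\), and reconstruction gives \((K,d)=(K_\mu,d_\mu)\).

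\textbf{Step 1: identify \(K\).} By definition, \(d\) is \(C^1\), positive, bounded, and bounded away from zero. Hence \(A_d(m)=\int_0^m dr/d(r)\) is finite, strictly increasing, \(C^1\), and diverges as \(m\to\infty\). \Cref{lem:survival} and the consistency requirement then give \(K=K_d=1-e^{-A_d}\). It follows that \(K\) is \(C^1\) with \(K'(m)=e^{-A_d(m)}/d(m)\), and \(\phi_K'(m)=1+y'(K(m))K'(m)\) is continuous.

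\textbf{Step 2: the boundary ODE and its limit.} Since \(d\) is a regular optimal drawdown boundary for \(\phi=\phi_K\), \Cref{lem:free-boundary-necessity} applies directly, so \(d\) satisfies \eqref{eq:ode2} with \(\phi=\phi_K\). Because \(y'\) is bounded, \(K'(m)\le e^{-A_d(m)}/\inf d\to0\), so \(\phi_K'(m)\to1\). The equilibrium definition gives \(d(m)\to d_\infty\), and \(d_\infty\in(0,\infty)\) because \(d\) is bounded away from zero. The second part of \Cref{lem:free-boundary-necessity} then yields \(d_\infty=\bar d_\mu\).

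\textbf{Step 3: change of variables.} Let \(m(t)\) be the inverse of \(A_d\), which exists on \([0,\infty)\) because \(A_d\) is a strictly increasing bijection onto \([0,\infty)\). Set \(\delta(t)\coloneqq d(m(t))\). Then \(m'(t)=\delta(t)\) and \(K(m(t))=1-e^{-t}\), so \(\phi_K'(m(t))=1+b(t)/\delta(t)\). Moreover, \(\delta'(t)=d'(m(t))\delta(t)\) is continuous, so \(\delta\) is \(C^1\). Multiplying \eqref{eq:ode2} at \(m=m(t)\) by \(\delta(t)\) gives exactly \(\delta'=F_\mu(t,\delta)\). The function \(\delta\) is bounded and positive because \(d\) is, and \(\delta(t)\to\bar d_\mu\) because \(m(t)\to\infty\).

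\textbf{Step 4: conclude.} By the uniqueness claim, \(\delta=\delta_\mu\). Hence \(m(t)=\int_0^t\delta_\mu=m_\mu(t)\), so \(A_d=T_\mu\). This gives \(K=K_\mu\) and \(d(m)=\delta_\mu(T_\mu(m))=d_\mu(m)\).

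The main obstacle is Step 2, namely checking that the hypotheses of \Cref{lem:free-boundary-necessity} are met. That lemma requires \(\phi_K\in C^1\), which is part of the equilibrium definition. It also requires \(\phi_K'\to1\), which needs the decay of \(K'\) shown in Step 1 together with boundedness of \(y'\). Everything else is bookkeeping.
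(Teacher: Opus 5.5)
Your proof is correct and follows essentially the same route as the paper: identify \(K=1-e^{-A_d}\) via \Cref{lem:survival}, obtain the free-boundary ODE and the limit \(\bar d_\mu\) from \Cref{lem:free-boundary-necessity}, rewrite along the hazard-time coordinate to get the scalar ODE, and conclude by uniqueness of the bounded terminal solution. The differences are only cosmetic. You identify \(K\) before invoking the necessity lemma, whereas the paper does it after. You also spell out two points the paper uses tacitly: that boundedness of \(y'\) gives \(\phi_K'\to1\), and that \(d_\infty\in(0,\infty)\).
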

\begin{proof}
Conversely, let \((K,d)\) be any regular drawdown equilibrium. Set \(\phi\coloneqq\phi_K\). Since \(d\) is a regular optimal drawdown boundary for the fixed-\(K\) problem at perceived drift \(\mu\), \(\tau_d\) is optimal for the fixed-\(\phi\) problem and the value function \(V^\mu_K=V^\mu_\phi\) is regular for \(d\) with terminal payoff \(\phi\) at perceived drift \(\mu\). By \Cref{lem:free-boundary-necessity}, \(d\) satisfies the fixed-\(K\) free-boundary ODE. Since \(d\) is bounded above and below, \(A_d(m)\to\infty\), and \Cref{lem:survival} delivers \(K(m)=1-\exp\left(-A_d(m)\right)\). Consequently, \(K\) is \(C^1\), with \(K^\prime(m)=\exp\left(-A_d(m)\right)/d(m)\to0\), so \(\phi_K^\prime(m)=1+y^\prime(K(m))K^\prime(m)\to1\). From \Cref{lem:free-boundary-necessity}, we then have \(\lim_{m\to\infty}d(m)=\bar d_\mu\).

Define \(m(t)\) as the unique point satisfying \(K(m(t))=1-e^{-t}\), and set \(\delta(t)=d(m(t))\). Since \(K(m)=1-\exp\left(-A_d(m)\right)\), we have \(A_d(m(t))=t\), and so \(m^\prime(t)=\delta(t)\). Moreover, \(K^\prime(m(t))=e^{-t}/\delta(t)\), so \(\phi_K^\prime(m(t))=1+b(t)/\delta(t)\). Rewriting the fixed-\(K\) ODE along \(m(t)\) produces the scalar ODE with terminal limit \(\bar d_\mu\). By the uniqueness of the bounded terminal solution, \(\delta=\delta_\mu\). Therefore, \(m=m_\mu\), \(d=d_\mu\), and \(K=K_\mu\).
\end{proof}

The six claims prove the theorem.
\end{proof}

\section{\texorpdfstring{\Cref{sec:comparative-statics} Proofs}{Section~\ref{sec:comparative-statics} Proofs}}\label[appendix]{omittedproofsdeux}

\subsection{Proof of \texorpdfstring{\Cref{lem:q-mono,lem:exp-quantile}}{Lemma~\ref{lem:q-mono}}}

\begin{proof}[Proof of \Cref{lem:q-mono}]
First, \(\bar d_\mu\) is strictly increasing in \(\mu\), since
\[
\bar d_\mu=\frac{\sigma^2}{2c}\frac{\log\left(\frac{1}{1-\mu/c}\right)}{\mu/c},
\]
and \(x\mapsto\log\left(1/(1-x)\right)/x\) is strictly increasing on \((0,1)\).

Let \(w(t)\coloneqq\delta_{\mu_2}(t)-\delta_{\mu_1}(t)\). Since \(\bar d_{\mu_2}>\bar d_{\mu_1}\), \(w(t)>0\) for all sufficiently large \(t\). Suppose, for the sake of contradiction, that \(w(t)\le0\) for some finite \(t\). Since \(w(t)\to\bar d_{\mu_2}-\bar d_{\mu_1}>0\), the set \(\left\{s\ge0\colon w(s)\le0\right\}\) is nonempty and bounded. Let
\[
t^\ast\coloneqq\sup\left\{s\ge0\colon w(s)\le0\right\}.
\]
By continuity, \(w(t^\ast)=0\), and \(w(t)>0\) for all \(t>t^\ast\) sufficiently close to \(t^\ast\). 

Let \(d\coloneqq\delta_{\mu_1}(t^\ast)=\delta_{\mu_2}(t^\ast)\). Subtracting the two scalar ODEs at \(t^\ast\), we obtain
\[
w'(t^\ast)=-\frac{d+b(t^\ast)}{c}\left[\frac{\mu_2}{1-\exp\left(-\frac{2\mu_2}{\sigma^2}d\right)}-\frac{\mu_1}{1-\exp\left(-\frac{2\mu_1}{\sigma^2}d\right)}\right].
\]
The bracketed term is strictly positive because \(x\mapsto x/(1-e^{-x})\) is strictly increasing on \((0,\infty)\), and so \(w'(t^\ast)<0\). But at a last crossing before \(w\) becomes positive to the right, the right derivative cannot be negative. This contradiction proves \(w(t)>0\) for every finite \(t\).
\end{proof}

\begin{proof}[Proof of \Cref{lem:exp-quantile}]
For every \(t\ge0\),
\[
\mathbb P\left(m_\mu(Z)\le m_\mu(t)\right)=\mathbb P(Z\le t)=1-e^{-t}=K_\mu(m_\mu(t)).
\]
Since \(m_\mu\) is continuous and strictly increasing, the claim follows.
\end{proof}

\subsection{Proof of \texorpdfstring{\Cref{thm:mean-var-mu}}{Theorem~\ref{thm:mean-var-mu}}}
\begin{proof}[Proof of \Cref{thm:mean-var-mu}]
By \Cref{lem:q-mono}, \(m_{\mu_2}'(t)>m_{\mu_1}'(t)\) for every finite \(t\). Since \(m_{\mu_1}(0)=m_{\mu_2}(0)=0\), \(m_{\mu_2}(t)>m_{\mu_1}(t)\) for every \(t>0\). \Cref{lem:exp-quantile}, therefore, implies \(M_{\mu_2}\succeq_{\mathrm{FOSD}}M_{\mu_1}\), with strict dominance at every positive quantile.

Let \(\Delta(t)\coloneqq m_{\mu_2}(t)-m_{\mu_1}(t)\). Then \(\Delta\) is increasing and nonconstant. Couple the two stopped maxima using the same \(Z\sim\mathrm{Exp}(1)\), and set \(X\coloneqq m_{\mu_1}(Z)\) and \(Y\coloneqq\Delta(Z)\). Then \(X\stackrel{d}{=}M_{\mu_1}\) and \(X+Y\stackrel{d}{=}M_{\mu_2}\). Since \(X\) and \(Y\) are increasing functions of the same scalar random variable, \(\operatorname{Cov}(X,Y)\ge0\). Indeed, if \(Z'\) is an independent copy of \(Z\), then
\[
2\operatorname{Cov}(X,Y)=\mathbb E\left[\left(m_{\mu_1}(Z)-m_{\mu_1}(Z')\right)\left(\Delta(Z)-\Delta(Z')\right)\right]\ge0.
\]
Since \(Y\) is nonconstant, \(\operatorname{Var}(Y)>0\). Also, \(\delta_\mu(t)\le d_\mu^+\), so \(m_\mu(t)\le d_\mu^+t\), and all second moments are finite. Consequently,
\[
\operatorname{Var}^0(M_{\mu_2})-\operatorname{Var}^0(M_{\mu_1})=\operatorname{Var}(Y)+2\operatorname{Cov}(X,Y)>0.\qedhere
\]
\end{proof}

\subsection{Proof of \texorpdfstring{\Cref{prop:true-stopping-times}}{Theorem~\ref{prop:true-stopping-times}}}

\begin{proof}[Proof of \Cref{prop:true-stopping-times}]
Since \(d_\mu\) is bounded above and below, \(A_\mu\left(m\right)\to\infty\). Hence, \(\tau_\mu<\infty\) \(\mathbb P^0\)-almost surely by \Cref{lem:survival}.

We now compute the joint Laplace transform of \(\tau_\mu\) and \(M_{\tau_\mu}\). In the notation of Avram, Li, and Li (ALL)~\cite[Theorem~2.1 and Corollary~4.2]{AvramLiLi2021}, the drawdown time is defined by a lower boundary \(f\), with \(\tau_f=\inf\left\{t\ge0\colon X_t<f\left(M_t\right)\right\}\). For the present boundary, set \(f_\mu\left(m\right)\coloneqq m-d_\mu\left(m\right)\), so \(m-f_\mu\left(m\right)=d_\mu\left(m\right)\). Since \(d_\mu\) satisfies the free-boundary ODE and \(\phi_{K_\mu}^\prime>0\), we have \(d_\mu^\prime\left(m\right)<1\), so \(f_\mu\) is increasing. The weak inequality in the definition of \(\tau_\mu\) and the strict inequality in the definition of \(\tau_f\) give the same stopping time almost surely. Indeed, at first contact, \(d_\mu\left(M_t\right)>0\), so \(X_t<M_t\). Accordingly, \(M_t\) is locally constant immediately after contact, and the boundary \(f_\mu\left(M_t\right)\) is locally constant as well. By the strong Markov property and the nonstickiness of Brownian motion at a point, the process crosses the boundary immediately.

Specializing the ALL formula to driftless Brownian motion with volatility \(\sigma\), using \(\lambda\left(q\right)\coloneqq\sqrt{2q}/\sigma\), \(\varphi_q^+\left(x\right)=e^{\lambda\left(q\right)x}\) and \(\varphi_q^-\left(x\right)=e^{-\lambda\left(q\right)x}\), produces
\[
\mathbb E^0\left[e^{-q\tau_\mu};M_{\tau_\mu}\in dm\right]
=
\frac{\lambda\left(q\right)}{\sinh\left(\lambda\left(q\right)d_\mu\left(m\right)\right)}
\exp\left[-\int_0^m\lambda\left(q\right)\coth\left(\lambda\left(q\right)d_\mu\left(u\right)\right)du\right]dm.
\]
As a check, when \(d_\mu\left(m\right)\equiv D\), integrating this density against \(e^{-\beta m}\) gives us
\[
\mathbb E^0\left[e^{-q\tau_D-\beta M_{\tau_D}}\right]
=
\frac{\lambda\left(q\right)}
{\lambda\left(q\right)\cosh\left(\lambda\left(q\right)D\right)+\beta\sinh\left(\lambda\left(q\right)D\right)},
\]
which is the classical fixed-drawdown Brownian formula of Taylor and Williams, and the fixed-drawdown specialization of Lehoczky's maximum-based formula \citep{Taylor1975,Williams1976,Lehoczky1977}.

Letting \(q\downarrow0\) in the joint Laplace density yields
\[
\mathbb P^0\left(M_{\tau_\mu}\ge m\right)=e^{-A_\mu\left(m\right)},
\qquad \text{and} \qquad
\mathbb P^0\left(M_{\tau_\mu}\in dm\right)=\frac{e^{-A_\mu\left(m\right)}}{d_\mu\left(m\right)}dm.
\]

The mean stopping time follows by differentiating the Laplace transform at zero. With \(\lambda=\sqrt{2q}/\sigma\),
\[
\lambda\coth\left(\lambda d\right)=\frac1d+\frac{\lambda^2d}{3}+O\left(\lambda^4d^3\right),
\qquad \text{and} \qquad
\frac{\lambda}{\sinh\left(\lambda d\right)}
=
\frac1d\left(1-\frac{\lambda^2d^2}{6}+O\left(\lambda^4d^4\right)\right).
\]
Since \(d_\mu\) is bounded above and below, these expansions are uniform for \(d\) in the range of \(d_\mu\). For \(q\) near zero, the resulting densities and their first \(q\)-derivatives are dominated by \(C\left(1+m\right)\exp\left(-m/C\right)\) for some finite \(C\). Therefore, we can differentiate the expansion at \(q=0\) and integrate over \(m\). Substituting, we have
\[
\mathbb E^0\left[\tau_\mu;M_{\tau_\mu}\in dm\right]
=
\frac{e^{-A_\mu\left(m\right)}}{3\sigma^2d_\mu\left(m\right)}
\left[d_\mu\left(m\right)^2+2\int_0^md_\mu\left(u\right)du\right]dm.
\]
Then divide by the density of \(M_{\tau_\mu}\) to get
\[
\mathbb E^0\left[\tau_\mu\mid M_{\tau_\mu}=m\right]
=
\frac{1}{3\sigma^2}
\left[d_\mu\left(m\right)^2+2\int_0^md_\mu\left(u\right)du\right],
\]
where the conditional expectation is understood as the Radon-Nikodym derivative of \(\mathbb E^0\left[\tau_\mu;M_{\tau_\mu}\in dm\right]\) with respect to the law of \(M_{\tau_\mu}\).

Now let \(B_\mu\left(m\right)\coloneqq\int_0^md_\mu\left(u\right)du\) and \(S_\mu\left(m\right)\coloneqq e^{-A_\mu\left(m\right)}\). Since \(S_\mu^\prime\left(m\right)=-S_\mu\left(m\right)/d_\mu\left(m\right)\), we integrate by parts to obtain
\[
\int_0^\infty B_\mu\left(m\right)\frac{S_\mu\left(m\right)}{d_\mu\left(m\right)}dm
=
\int_0^\infty d_\mu\left(m\right)S_\mu\left(m\right)dm,
\]
where the boundary term vanishes because \(B_\mu\left(m\right)\) grows at most linearly and \(S_\mu\left(m\right)\) decays exponentially. In sum,
\[
\mathbb E^0\left[\tau_\mu\right]
=
\frac{1}{\sigma^2}\int_0^\infty d_\mu\left(m\right)e^{-A_\mu\left(m\right)}dm
=
\frac{1}{\sigma^2}\mathbb E^0\left[d_\mu\left(M_{\tau_\mu}\right)^2\right]
=
\frac{1}{\sigma^2}\int_0^\infty\delta_\mu\left(t\right)^2e^{-t}dt,
\]
where the second equality uses the density of \(M_{\tau_\mu}\) and the third equality follows from the fact that in hazard time, \(A_\mu\left(m_\mu\left(t\right)\right)=t\), \(dm_\mu\left(t\right)=\delta_\mu\left(t\right)dt\), and \(d_\mu\left(m_\mu\left(t\right)\right)=\delta_\mu\left(t\right)\).

We conclude that if \(\mu_1<\mu_2\), then \(\delta_{\mu_2}\left(t\right)>\delta_{\mu_1}\left(t\right)\) for every finite \(t\) by \Cref{lem:q-mono} and so \(\mathbb E^0\left[\tau_{\mu_2}\right]>\mathbb E^0\left[\tau_{\mu_1}\right]\).\end{proof}

\end{document}